\newenvironment{breakablealgorithm}
  {
   \begin{center}
     \refstepcounter{algorithm}
     \hrule height.8pt depth0pt \kern2pt
     \renewcommand{\caption}[2][\relax]{
       {\raggedright\textbf{\ALG@name~\thealgorithm} ##2\par}%
       \ifx\relax##1\relax 
         \addcontentsline{loa}{algorithm}{\protect\numberline{\thealgorithm}##2}%
       \else 
         \addcontentsline{loa}{algorithm}{\protect\numberline{\thealgorithm}##1}%
       \fi
       \kern2pt\hrule\kern2pt
     }
  }{
     \kern2pt\hrule\relax
   \end{center}
  }
\begin{document}

\title{\textcolor{black}{Caching} at Base Stations with Multi-Cluster Multicast
\textcolor{black}{Wireless} Backhaul via Accelerated First-Order Algorithm}

\author{Yang~Li,~\IEEEmembership{Student Member,~IEEE,}
        Minghua~Xia,~\IEEEmembership{Member,~IEEE,}
        and~Yik-Chung~Wu,~\IEEEmembership{Senior Member,~IEEE}
\thanks{Manuscript received March 7, 2019; revised August 20, 2019 and October 29, 2019; accepted January 18, 2020.
This work was supported in part by the National Natural Science Foundation of China under Grant 61671488,  in part by the Major Science and Technology Special Project of Guangdong Province under Grant 2018B010114001, and in part by the Fundamental Research Funds for the Central Universities under Grant 191gjc04.
The associate editor coordinating the review of this paper and approving it for publication was A. Abrardo. }
\thanks{Y. Li and Y.-C. Wu are
with the Department of Electrical and Electronic Engineering, The
University of Hong Kong, Hong Kong (e-mail:
\{liyang, ycwu\}@eee.hku.hk).}
\thanks{M. Xia is with the School of Electronics and Information Technology, Sun Yat-sen University, Guangzhou, 510006, China, and with Southern Marine Science and Engineering Guangdong Laboratory (Zhuhai) (e-mail: xiamingh@mail.sysu.edu.cn).}
\thanks{Color versions of one or more of the figures in this paper are available online at http://ieeexplore.ieee.org.}
\thanks{Digital Object Identifier}
}

\IEEEpubid{\begin{minipage}{\textwidth} \ \\[12pt] \centering 1536-1276 \copyright\ 2018 IEEE. Translations and content mining are permitted for academic research only. Personal use is also permitted, \\
but republication/redistribution requires IEEE permission. See \url{http://www.ieee.org/publications_standards/publications/rights/index.html for more information}.\end{minipage}}

\maketitle
\begin{abstract}
Cloud radio access network (C-RAN) has been recognized as a promising architecture for next-generation wireless systems to \textcolor{black}{support} the rapidly increasing demand for higher data rate.
However, the performance of C-RAN is limited by the backhaul capacities, especially for the wireless deployment. While C-RAN with fixed BS caching has been demonstrated to reduce backhaul consumption,
it is more challenging to further optimize the cache allocation at BSs with multi-cluster multicast backhaul, where the inter-cluster interference induces additional non-convexity to the cache optimization problem.
Despite the challenges, we propose an accelerated first-order algorithm,
which achieves much higher content downloading sum-rate than a second-order algorithm running for the same amount of time. Simulation results demonstrate that,  by simultaneously delivering the required contents to different multicast clusters,
the proposed algorithm achieves significantly higher downloading sum-rate than those of
time-division single-cluster transmission schemes. Moreover, it is found
that the proposed algorithm allocates larger cache sizes to the farther BSs within the nearer clusters,
which provides insight to the superiority of the proposed cache allocation.
\end{abstract}

\begin{IEEEkeywords}
Caching, cloud radio access network (C-RAN), first-order algorithm, large-scale nonsmooth nonconvex optimization, multi-cluster multicast beamforming (MCMB), wireless backhaul.
\end{IEEEkeywords}

\section{Introduction}
To meet the dramatically increasing demand for higher data rate, cloud radio access network (C-RAN), where the base stations (BSs) are connected to a computation center via high-speed backhaul links for multi-BS cooperation, is a promising architecture for next-generation wireless systems \cite{Rost2014,Simeone2016,Quek2017}. However, the performance of C-RAN is mainly limited by the backhaul
capacities from the computation center to the BSs,
especially for the small-cell deployment, where high-speed optical fiber connections may not be available \cite{Dai2016}, and wireless backhaul is the only option.

On the other hand, with modern wireless data traffic being more and more dominated by
videos and other multimedia data, content-centric communications
exploiting multicast transmission and BS caching
draw a lot of attention lately \cite{Lecompte2012,Golrezaei2013,LiuH2014}.
As multiple BSs in the same cluster share the same users' data for BS cooperation, by multicasting users' messages from the computation center to these BSs simultaneously, the broadcast nature of the wireless backhaul channels can be efficiently exploited. Furthermore, by proactively caching a fraction of
popular contents at each BS, the amount of data to be delivered
through the wireless backhaul is reduced, thus improving the system
efficiency in terms of content downloading rate \cite{Dai2018}.

While C-RAN with BS caching has been investigated in
\cite{Ugur2016,Tao2016,Li2018,Park2016}, they all assume fixed cache allocation among BSs and focus on how BS caching helps to improve the system performance. In particular, \cite{Ugur2016,Tao2016,Li2018} investigate how BS caching facilitates the reduction of backhaul burden and power consumption. In \cite{Park2016}, data-sharing and compression are combined to examine how BS caching help in improving the spectral efficiency. On the other hand, although cache optimization has been studied in
\cite{Gitzenis2013,Shanmugam2013,Bastug2015,Cui2016,Xu2017}, they only
focus on the layer between the BS and the users, without considering the limitation of
the backhaul efficiency.
To the best of our knowledge, only the pioneering work \cite{Dai2018} investigates cache optimization at BSs aiming at improving the backhaul efficiency between the computation center and the BSs.

Unfortunately, since \cite{Dai2018} only considered a simplified C-RAN setup with a single cluster of BSs, the resulting caching scheme could not directly generalize to the more practical scenario with multiple BS clusters. For the multi-cluster scenario, the computation center is required to transmit different multicast data to different BS clusters simultaneously, resulting in inter-cluster interference, which in turn induces additional non-convexity to the cache optimization problem.

\IEEEpubidadjcol

Despite the challenges mentioned above, this paper optimizes the cache allocation at BSs
for a C-RAN with multi-cluster multicast backhaul,
aiming to maximize the content downloading sum-rate of the wireless backhaul
under a total cache budget constraint.
Since cache placement impacts a much larger timescale than that of channel variations \cite{Park2013,Patil2015,Dai2016}, the cache allocation
should be optimized based on a large number of potential channel realizations.
Furthermore, to maximize the content downloading sum-rate,
various channel realizations requires tailored optimal beamformers,
which are coupled in the optimization of cache sizes.
Consequently, with a large number of beamfomers being nuisance variables,
the cache allocation is a large-scale nonsmooth nonconvex problem. To solve this problem, we first tackle the non-smoothness and non-convexity by introducing auxiliary variables and constructing a sequence of quadratic convex functions in the successive convex approximation (SCA) framework. But instead of directly solving each convexified problem with the interior-point method, we further construct a strongly convex upper bound of the cost function, so that an accelerated first-order algorithm can be developed for solving each SCA subproblem in its dual domain.

Simulation results show that the proposed accelerated first-order algorithm
achieves much higher content downloading sum-rate than a second-order algorithm running for the same amount of time. Moreover,
by simultaneously delivering the required contents to different multicast clusters,
the proposed algorithm achieves significantly higher downloading sum-rate than those of time-division single-cluster transmission schemes. Finally, it is found that the proposed algorithm proactively allocates larger cache sizes to the farther BSs within the nearer clusters, which provides insight to the superiority of the proposed cache allocation.

The remainder of this paper is organized as follows. System model and problem formulation are introduced in Section II. In Section III, an accelerated first-order algorithm is proposed for the cache allocation. The multi-cluster multicast beamforming (MCMB) design for content delivery is presented in Section IV. Simulation results and discussions are provided in Section V. Section VI concludes the paper.

Throughout this paper, scalars, vectors, and matrices are denoted by lower-case letters (e.g., $a$), lowercase bold letters (e.g., $\mathbf{a}$), and upper bold letters (e.g., $\mathbf{A}$), respectively. The complex domain is denoted by $\mathbb{C}$. We denote the transpose and conjugate transpose of a vector/matrix by $(\cdot)^T$ and $(\cdot)^H$, respectively.
The real part, trace, and Frobenius-norm of a matrix are denoted by
$\Re(\cdot)$, $\mathrm{Tr}(\cdot)$, and $\Vert\cdot\Vert_F$, respectively.
The expectation of a random variable is denoted by $\mathbb{E}[\cdot]$,
and the complex Gaussian distribution is represented as $\mathcal{CN}\left(\cdot, \cdot\right)$.

\section{System Model and Problem Formulation}
Consider a downlink C-RAN with $G$ clusters of BSs connected to a computation center through wireless backhaul. To effectively utilize the wireless medium, the computation center adopts multicast beamforming to deliver users' intended messages to each cluster,
and the BSs in each cluster serve their users through cooperative transmission with
data sharing \cite{Simeone2009,Dai2014}. Furthermore, to alleviate the backhaul burden, each BS is equipped with a local cache to pre-store a subset of popular files.
An example of such a cache enabled downlink C-RAN
is illustrated in Fig. \ref{system model}, where the BSs are clustered into $G$
disjoint clusters \cite{3GPP,Huang2009}.
In this paper, we assume that the BSs have been clustered,
and focus on how to optimally allocate the cache sizes among the BSs to improve the backhaul efficiency.

\begin{figure}[t!]
\centering
  \includegraphics[width=0.4\textwidth]{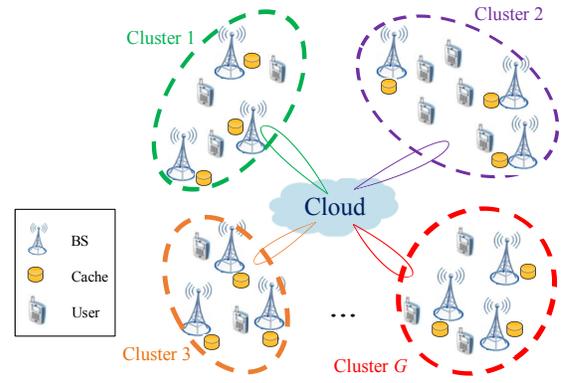}
  \caption{A downlink C-RAN consists of $G$ clusters of BSs, where each BS is equipped with a local cache.}
  \label{system model}
\vspace{-0.2cm}
\end{figure}

Let $M$ and $N$ ($M>N$) denote the numbers of antennas at the computation center and each BS, respectively.
Then the maximum number of independent data streams of each cluster is $d=\min\{M, N\}=N$,
and the multicast beamforming matrix from the computation centre
to the $g$-th cluster of BSs is denoted as $\mathbf{V}_g\in \mathbb{C}^{M\times d}$.
Denoting the total number of BSs as $K$, we can express the
received signal at the $k$-th BS as
\begin{eqnarray}\label{rs}
\mathbf{y}_k = \underbrace{\mathbf{H}_{k}\mathbf{V}_{g_k}\mathbf{x}_{g_k}}_{\text{desired multicast signal}}
+\underbrace{\sum_{g^\prime\neq g_k}\mathbf{H}_{k}\mathbf{V}_{g^\prime}\mathbf{x}_{g^\prime}}_{\text{inter-cluster interference}}+\mathbf{n}_k,
\end{eqnarray}
where $\mathbf{H}_k \in \mathbb{C}^{N\times M}$ is the channel matrix from the computation centre to BS $k$, $g_k\in\{1,2,\ldots,G\}$ is the index of the group to which BS $k$ belongs,
$\mathbf{x}_{g_k} \in \mathbb{C}^{d\times 1}$ is the data vector sent to
cluster $g_k$, and $\mathbf{n}_k\sim\mathcal{CN}\left(\mathbf{0}, \sigma_k^2\mathbf{I}_N\right)$ is the additive white Gaussian noise.
Based on \eqref{rs}, the mutual information between the
transmit signal $\mathbf{V}_{g_k}\mathbf{x}_{g_k}$ and
the received signal $\mathbf{y}_k$ can be written as
\begin{eqnarray}\label{mi}
I\left(\mathbf{V}_{g_k}\mathbf{x}_{g_k}; \mathbf{y}_k\right)=
\log\det\left(\mathbf{I}_N
+\mathbf{H}_k\mathbf{V}_{g_k}\mathbf{V}_{g_k}^H\mathbf{H}_k^H\mathbf{J}_k\right),
\end{eqnarray}
where $\mathbf{J}_k\triangleq\left(\sum_{g^\prime\neq g_k}\mathbf{H}_k\mathbf{V}_{g^\prime}\mathbf{V}_{g^\prime}^H\mathbf{H}_k^H
+\sigma_k^2\mathbf{I}_{N}\right)^{-1}$.

A central issue in C-RAN is to alleviate the backhaul burden during the
peak traffic time \cite{Dai2014}.
To address this issue, caching highly popular files at BSs during off-peak hours provides a viable solution \cite{Tao2016}\cite{Liu2016}.
However, the network operator has a fixed budget to deploy only a
limited amount of total cache size. Due to the limited
cache size,
\textcolor{black}{
each BS pre-stores fractions of popular contents during off-peak hours,
}and requests the rest from the computation center via wireless backhaul~\cite{Dai2018}\cite{Bidokhti2018}.
\textcolor{black}{
Specifically,
BS $k$ caches the first $C_k$ bits of the file requested by cluster $g_k$.
Denote $F_{g_k}$ as the total size of the file requested by cluster $g_k$,
then BS $k$ requires to receive the rest $F_{g_k}-C_k$ bits of the file from the computation center
when the file is delivered to mobile users \cite{Dai2018}\cite{Bidokhti2018}.}
With the knowledge of cached content at the BSs, an efficient joint cache-channel coding strategy \cite{Bidokhti2018} results in
the content downloading rate of cluster $g$ as
$R_g=\min_{k\in\mathcal{K}_g}
\left\{\frac{I\left(\mathbf{V}_{g}\mathbf{x}_{g}; \mathbf{y}_k\right)}{1-C_k/F_g}\right\}$ \cite{Dai2018},
where $\mathcal{K}_g$ denotes the set of BSs in cluster $g$.
By substituting the mutual information $I\left(\mathbf{V}_{g}\mathbf{x}_{g}; \mathbf{y}_k\right)$ into $R_g$, the downloading sum-rate of all the $G$ clusters of BSs can be written as
\begin{eqnarray}\label{DRsum}
R_{\text{sum}}
&=&\sum_{g=1}^G\min_{k\in\mathcal{K}_g}
\Bigg\{\frac{F_g}{F_g-C_k}\nonumber\\
&&
\cdot\log\det\left(\mathbf{I}_N
+\mathbf{H}_k\mathbf{V}_{g}\mathbf{V}_{g}^H\mathbf{H}_k^H
\mathbf{J}_k\right)\Bigg\}.~~~~~
\end{eqnarray}

To improve the backhaul efficiency, the cache sizes $\left\{C_k\right\}$ should be allocated to maximize the content downloading sum-rate in \eqref{DRsum}.
However, since cache placement happens in a much
larger timescale than scheduling and transmission \cite{Shanmugam2013,Binqiang2017,LiuD2018},
cache sizes optimization should be based on long-term
channel statistics.
Furthermore, to maximize the content downloading sum-rate,
the cache sizes should also be optimized together with the
optimal beamformers.
This gives the following cache size allocation problem:
\begin{subequations}\label{Cache0}
\begin{eqnarray}\label{Cache0 obj}
&\max_{\left\{C_k\right\}}&
\mathbb{E}_{\left\{\mathbf{H}_k\right\}}
\Bigg[\max_{\left\{\mathbf{V}_g\right\}}\sum_{g=1}^G\min_{k\in\mathcal{K}_g}
\Bigg\{\frac{F_g}{F_g-C_k}\nonumber\\
&&\cdot\log\det\left(\mathbf{I}_N
+\mathbf{H}_k\mathbf{V}_{g}\mathbf{V}_{g}^H\mathbf{H}_k^H
\mathbf{J}_k\right)\Bigg\}\Bigg],~~~~~
\end{eqnarray}
\begin{equation}\label{Cache0 power}
\text{s.t.}\
\sum_{g=1}^G\mathrm{Tr}\left(\mathbf{V}_{g}
\mathbf{V}_{g}^H\right)\leq P_{\text{tot}},
\end{equation}
\begin{equation}\label{Cache0 cache1}
~~~~\
\sum_{k=1}^KC_k\leq C_{\text{tot}},
\end{equation}
\begin{equation}\label{Cache0 cache2}
~~~~\
0\leq C_k \leq F_{g_k},~~\forall k=1, 2, \ldots, K,
\end{equation}
\end{subequations}
where $P_{\text{tot}}$ is the total budget of the transmit power at the computation center,
and $C_{\text{tot}}$ is the total budget of the cache size for the whole network.
A common approach to tackle the expectation in \eqref{Cache0 obj} is
the sample approximation \cite{Birge2011}, which
reformulates \eqref{Cache0} as
\begin{subequations}\label{Cache1}
\begin{eqnarray}\label{Cache1 obj}
&\max_{\left\{C_k\right\}, \left\{\mathbf{V}_{g,t}\right\}}&
\sum_{t=1}^T\sum_{g=1}^G\min_{k\in\mathcal{K}_g}
\Bigg\{\frac{F_g}{F_g-C_k}\log\det\Big(\mathbf{I}_N
\nonumber\\
&&+\mathbf{H}_{k,t}\mathbf{V}_{g,t}\mathbf{V}_{g,t}^H\mathbf{H}_{k,t}^H
\mathbf{J}_{k,t}\Big)\Bigg\},~~~~~~
\end{eqnarray}
\begin{equation}\label{Cache1 power}
\text{s.t.}\
\sum_{g=1}^G\mathrm{Tr}\left(\mathbf{V}_{g,t}
\mathbf{V}_{g,t}^H\right)\leq P_{\text{tot}},~~\forall t=1, 2, \ldots, T,
\end{equation}
\begin{equation}\label{Cache1 cache1}
~~~~\
\sum_{k=1}^KC_k\leq C_{\text{tot}},
\end{equation}
\begin{equation}\label{Cache1 cache2}
~~~~\
0\leq C_k \leq F_{g_k},~~\forall k=1, 2, \ldots, K,
\end{equation}
\end{subequations}
where $T$ is the sample size, $\left\{\mathbf{H}_{k,t}\right\}_{k=1}^K$
are the $t$-th channel samples,
which can be drawn from any given channel distribution or historical channel realizations,
$\left\{\mathbf{V}_{g,t}\right\}_{g=1}^G$ are the corresponding beamformers,
and $\mathbf{J}_{k,t}\triangleq\left(\sum_{g^\prime\neq g_k}\mathbf{H}_{k,t}\mathbf{V}_{g^\prime,t}\mathbf{V}_{g^\prime,t}^H\mathbf{H}_{k,t}^H
+\sigma_k^2\mathbf{I}_{N}\right)^{-1}$.

However, problem \eqref{Cache1} is challenging to solve due to three reasons.
Firstly, the objective function \eqref{Cache1 obj} is nonsmooth since the
content downloading rate of each BS cluster is the minimum over
$\left\vert\mathcal{K}_g\right\vert$ terms. Secondly, the objective function \eqref{Cache1 obj}
is also nonconcave due to the nonconcave coupling between $\frac{F_g}{F_g-C_k}$ and $\log\det\left(\mathbf{I}_N
+\mathbf{H}_{k,t}\mathbf{V}_{g,t}\mathbf{V}_{g,t}^H\mathbf{H}_{k,t}^H
\mathbf{J}_{k,t}\right)$,
and the involvement of $\left\{\mathbf{V}_{g,t}\right\}$ in the inter-cluster interference inside the expression of $\mathbf{J}_{k,t}$.
Thirdly, since the sample size $T$ is generally large for good approximation,
problem \eqref{Cache1} is imposed by a
large number of variables and constraints, which induces a heavy computational burden.

\vspace{2mm}
\textcolor{black}{
\emph{Remark 1:}
For multicast transmission, the file requests in the same multicast group
should arrive within a very short time period.
Although this assumption is a little strong for mobile users,
it makes more sense for the considered scenario in this paper,
where the multicast receivers are BSs
\textcolor{black}{that require data sharing for coordinated multipoint joint processing}
rather than mobile users.
Therefore, it is reasonable to assume that the BSs in the same cluster
request the content simultaneously.}

\vspace{2mm}
\textcolor{black}{
\emph{Remark 2:}
If the computation center transmits data to each BS
directly, it either transmits the data in a time-division fashion, or transmits multiple beams at the same time. For using the time division transmission, it avoids interference among beams, but it would take a long time to transmit, as one BS is served after another. On the other hand, if multiple beams are transmitted at the same time, the transmission time is shortened, but the interference among beams would cause severe decoding error.}

\vspace{2mm}
\textcolor{black}{
\emph{Remark 3:}
Strictly speaking, $C_k$ is a discrete variable, which makes
the optimization problem~\eqref{Cache1}
combinatorial and highly complex.
To make it more tractable, as in the most relevant work \cite{Dai2018}, we set $C_k$ as a continuous variable.
Consequently, it is much easier to reveal the insight of caching as shown in Section VI.
For practical implementation of the scheme, the solution of $C_k$ can be rounded off to the nearest integer after the continuous optimization problem is solved.}

\section{Accelerated First-Order Algorithm for Cache Allocation}
In this section, we strive to solve the large-scale nonsmooth nonconvex
cache size allocation problem \eqref{Cache1}.
Specifically, we first tackle the non-smoothness and non-convexity of problem
\eqref{Cache1} by introducing auxiliary variables and constructing a sequence of
quadratic convex functions in the SCA framework. Then,
instead of directly solving each convexified problem with the interior-point method,
we further construct a strongly convex upper bound of the cost function, so that an accelerated first-order algorithm is developed
for solving each SCA subproblem in its dual domain.

\subsection{Tackling Non-Smoothness and Non-Convexity}
We first tackle the non-smoothness of the objective function \eqref{Cache1 obj}.
Since \eqref{Cache1 obj} is the minimum over
$\left\vert\mathcal{K}_g\right\vert$ terms, by introducing a set of auxiliary variables
$\left\{\eta_{g,t}\right\}$ such that $\eta_{g,t}\leq
\frac{1}{F_g-C_k}\log\det\left(\mathbf{I}_N
+\mathbf{H}_{k,t}\mathbf{V}_{g,t}\mathbf{V}_{g,t}^H\mathbf{H}_{k,t}^H
\mathbf{J}_{k,t}\right)$, $\forall k\in\mathcal{K}_g$,
problem \eqref{Cache1} can be equivalently transformed into the following smooth problem:
\begin{subequations}\label{Cache2}
\begin{equation}\label{Cache2 obj}
\min_{\left\{C_{k}\right\}, \left\{\mathbf{V}_{g,t}, \eta_{g,t}\right\}}-\sum_{t=1}^T\sum_{g=1}^GF_g\eta_{g,t},
\end{equation}
\begin{equation}\label{Cache2 power}
\text{s.t.}\
\sum_{g=1}^G\mathrm{Tr}\left(\mathbf{V}_{g,t}
\mathbf{V}_{g,t}^H\right)\leq P_{\text{tot}},~~\forall t=1, 2, \ldots, T,
\end{equation}
\begin{equation}\label{Cache2 cache1}
~~~~\
\sum_{k=1}^KC_k\leq C_{\text{tot}},
\end{equation}
\begin{equation}\label{Cache2 cache2}
~~~~\
0\leq C_k \leq F_{g_k},~~\forall k=1, 2, \ldots, K,
\end{equation}
\begin{eqnarray}\label{Cache2 rate}
~~~~\
(F_{g_k}-C_k)\eta_{g_k,t}-\log\det\Big(\mathbf{I}_N
+\mathbf{H}_{k,t}\mathbf{V}_{g_k,t}
\mathbf{V}_{g_k,t}^H
\nonumber\\
\cdot\mathbf{H}_{k,t}^H
\mathbf{J}_{k,t}\Big)
\leq0, \forall k=1, 2, \ldots, K,
\forall t=1, 2, \ldots, T.
\end{eqnarray}
\end{subequations}

However, due to the nonconvex coupling between $C_k$ and $\eta_{g_k,t}$,
and the involvement of $\left\{\mathbf{V}_{g,t}\right\}$ in the inter-cluster interference inside the expression of $\mathbf{J}_{k,t}$, the constraint \eqref{Cache2 rate} is nonconvex, making
problem \eqref{Cache2} still challenging to solve.
For the special case of $G=1$, $\mathbf{J}_k$ would reduce to $1/\sigma_k^2\mathbf{I}_N$,
which is independent of the variable $\mathbf{V}_1$. Therefore,
by introducing an auxiliary variable $\mathbf{W}_1=\mathbf{V}_1\mathbf{V}_1^H$
and dropping the rank constraint of $\mathbf{W}_1$ \cite{Dai2018},
\eqref{Cache2 rate} would be convex over $\mathbf{W}_1$.
However, for the general case of $G>1$, since $\mathbf{J}_k$
involves variables $\left\{\mathbf{V}_g\right\}$,
the above convexity over $\left\{\mathbf{W}_g\right\}$ does not hold.
Thus, for the general setting of $G>1$,
the non-convexity of \eqref{Cache2 rate} is difficult to tackle.

A prevalent technique to tackle nonconvex constraints is the successive convex approximation (SCA) \cite{Beck2010}, in which nonconvex constraints are approximated by a sequence of convex constraints. When the nonconvex constraints are in difference of convex (DC) forms, a common approach for convex approximation is the convex-concave procedure (CCP) \cite{Lipp2016}.
Nevertheless, CCP is not applicable to \eqref{Cache2 rate}, since it is not in a DC form.
To address this issue, we construct a sequence of convex constraints to approximate \eqref{Cache2 rate} by quadratically convexifying the left-hand-side of \eqref{Cache2 rate}.
Specifically, given any fixed $C_{k}^{(i)}$, $\eta_{g_k,t}^{(i)}$, and
$\left\{\mathbf{V}_{g,t}^{(i)}\right\}_{g=1}^G$, we define a convex quadratic function
\begin{eqnarray}\label{f2}
f_{k,t}^{(i)}\left(C_{k}, \eta_{g_k,t}, \left\{\mathbf{V}_{g,t}\right\}_{g=1}^G\right)
\triangleq
\sum_{g=1}^G\mathrm{Tr}\left(\mathbf{V}_{g,t}^H
\mathbf{A}_{k,t}^{(i)}\mathbf{V}_{g,t}\right)
\nonumber\\
+2\Re\left\{\mathrm{Tr}\left(\mathbf{B}_{k,t}^{(i)}\mathbf{V}_{g_k,t}\right)\right\}
+\frac{\eta_{g_k,t}^2+C_k^2}{2}
+F_{g_k}\eta_{g_k,t}
\nonumber\\
-\left(\eta_{g_k,t}^{(i)}+C_k^{(i)}\right)\left(\eta_{g_k,t}+C_k\right)
+b_{k,t}^{(i)},
\end{eqnarray}
where $\mathbf{A}_{k,t}^{(i)}$, $\mathbf{B}_{k,t}^{(i)}$, and $b_{k,t}^{(i)}$
are given by
\begin{eqnarray}
\mathbf{A}_{k,t}^{(i)}&\triangleq&
\mathbf{H}_{k,t}^H\mathbf{U}_{k,t}^{(i)}\left(\mathbf{I}_{d}-\left(\mathbf{U}_{k,t}^{(i)}\right)^H
\mathbf{H}_{k,t}\mathbf{V}_{g_k,t}^{(i)}\right)^{-1}\nonumber\\
&&\cdot\left(\mathbf{U}_{k,t}^{(i)}\right)^H\mathbf{H}_{k,t},
\label{Dt}\\
\mathbf{B}_{k,t}^{(i)}&\triangleq&
-\left(\mathbf{I}_{d}-\left(\mathbf{U}_{k,t}^{(i)}\right)^H
\mathbf{H}_{k,t}\mathbf{V}_{g_k,t}^{(i)}\right)^{-1}
\nonumber\\
&&\cdot\left(\mathbf{U}_{k,t}^{(i)}\right)^H\mathbf{H}_{k,t},
\label{Ct}\\
b_{k,t}^{(i)}&\triangleq&
\mathrm{Tr}\Bigg(\left(\mathbf{I}_{d}-\left(\mathbf{U}_{k,t}^{(i)}\right)^H
\mathbf{H}_{k,t}\mathbf{V}_{g_k,t}^{(i)}\right)^{-1}
\nonumber\\
&&\cdot
\left(\mathbf{I}_{d}
+\sigma_k^2\left(\mathbf{U}_{k,t}^{(i)}\right)^H\mathbf{U}_{k,t}^{(i)}\right)\Bigg)
\nonumber\\
&&+\log\det\left(\mathbf{I}_d-\left(\mathbf{U}_{k,t}^{(i)}\right)^H
\mathbf{H}_{k,t}\mathbf{V}_{g_k,t}^{(i)}\right)
\nonumber\\
&&
+\frac{\left(\eta_{g_k,t}^{(i)}+C_k^{(i)}\right)^2}{2}-d,
\label{bt}
\end{eqnarray}
with
\begin{eqnarray}\label{Uti}
\mathbf{U}_{k,t}^{(i)}&\triangleq&
\left(\sum_{g=1}^G\mathbf{H}_{k,t}\mathbf{V}_{g,t}^{(i)}
\left(\mathbf{V}_{g,t}^{(i)}\right)^H\mathbf{H}_{k,t}^H
+\sigma_k^2\mathbf{I}_{N}\right)^{-1}
\nonumber\\
&&\cdot\mathbf{H}_{k,t}\mathbf{V}_{g_k,t}^{(i)}.
\end{eqnarray}
Then, we can establish
two properties of $f_{k,t}^{(i)}\left(C_{k}, \eta_{g_k,t}, \left\{\mathbf{V}_{g,t}\right\}_{g=1}^G\right)$ with the following proposition.

\newtheorem{proposition}{Proposition}
\begin{proposition}\label{lm2}
The defined function $f_{k,t}^{(i)}\left(C_{k}, \eta_{g_k,t}, \left\{\mathbf{V}_{g,t}\right\}_{g=1}^G\right)$ in \eqref{f2} satisfies:

\textbf{(1.1)} $f_{k,t}^{(i)}\left(C_{k}, \eta_{g_k,t}, \left\{\mathbf{V}_{g,t}\right\}_{g=1}^G\right)\geq(F_{g_k}-C_k)\eta_{g_k,t}-\log\det\left(\mathbf{I}_N
+\mathbf{H}_{k,t}\mathbf{V}_{g_k,t}\mathbf{V}_{g_k,t}^H\mathbf{H}_{k,t}^H
\mathbf{J}_{k,t}\right)$,
where the equality holds at $C_k=C_k^{(i)}$, $\eta_{g_k}=\eta_{g_k}^{(i)}$, and
$\mathbf{V}_{g,t}=\mathbf{V}_{g,t}^{(i)}$, $\forall g=1,2,\ldots,G$.

\textbf{(1.2)} $\frac{\partial }{\partial \tau}
f_{k,t}^{(i)}\left(C_{k}^{(i)}, \eta_{g_k,t}^{(i)}, \left\{\mathbf{V}_{g,t}^{(i)}\right\}_{g=1}^G\right)=\frac{\partial}{\partial \tau}\Bigg((F_{g_k}-C_k^{(i)})\eta_{g_k,t}^{(i)}-\log\det\Bigg(\mathbf{I}_N
+\mathbf{H}_{k,t}\mathbf{V}_{g_k,t}^{(i)}\left(\mathbf{V}_{g_k,t}^{(i)}\right)^H$
$\mathbf{H}_{k,t}^H
\mathbf{J}_{k,t}^{(i)}\Bigg)\Bigg)$,
where $\tau$ represents $C_k$, $\eta_{g_k}$, or
any element of $\mathbf{V}_{g,t}$, $\forall g=1,2,\ldots,G$,
and $\mathbf{J}_{k,t}^{(i)}=\left.\mathbf{J}_{k,t}\right\vert_{\big\{\mathbf{V}_{g,t}\big\}_{g=1}^G=\big\{\mathbf{V}_{g,t}^{(i)}\big\}_{g=1}^G}$.
\end{proposition}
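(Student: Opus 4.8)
The plan is to prove that $f_{k,t}^{(i)}$ is a valid surrogate for the left-hand side of the rate constraint \eqref{Cache2 rate} in the standard SCA sense: it is a global upper bound (property (1.1)) that is tangent to first order at the current iterate (property (1.2)). The key structural observation is that the awkward term is the concave function $\log\det\left(\mathbf{I}_N+\mathbf{H}_{k,t}\mathbf{V}_{g_k,t}\mathbf{V}_{g_k,t}^H\mathbf{H}_{k,t}^H\mathbf{J}_{k,t}\right)$, where $\mathbf{J}_{k,t}$ itself depends on all the beamformers through the inter-cluster interference. The first step is therefore to recognize this log-det expression as a rate that admits the well-known WMMSE (weighted minimum mean-square error) reformulation: for fixed interference-plus-noise covariance, $\log\det(\mathbf{I}+\text{SINR})$ equals the maximum over a linear receiver $\mathbf{U}$ and a weight matrix of a concave-in-$\mathbf{V}$ expression, and the maximizing receiver is precisely the MMSE receiver $\mathbf{U}_{k,t}^{(i)}$ defined in \eqref{Uti}. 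I would first establish that evaluating this variational lower bound at the fixed $\mathbf{U}_{k,t}^{(i)}$ yields a quadratic-in-$\mathbf{V}$ minorant of the rate, which accounts for the $\mathbf{A}_{k,t}^{(i)}$, $\mathbf{B}_{k,t}^{(i)}$, and the log-det / trace constants collected in $b_{k,t}^{(i)}$; because the full SINR also sits inside $\mathbf{J}_{k,t}$, care is needed to confirm that fixing $\mathbf{U}_{k,t}^{(i)}$ linearizes the interference correctly so that a lower bound on the rate becomes an upper bound on $-\log\det$.

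Second, I would handle the bilinear coupling $(F_{g_k}-C_k)\eta_{g_k,t}$. The product $-C_k\eta_{g_k,t}$ is indefinite, so I would upper-bound it by completing the square: using $-C_k\eta_{g_k,t}\le \tfrac{1}{2}(C_k^2+\eta_{g_k,t}^2)-(C_k^{(i)}+\eta_{g_k,t}^{(i)})(C_k+\eta_{g_k,t})+\tfrac{1}{2}(C_k^{(i)}+\eta_{g_k,t}^{(i)})^2$, which is exactly the arrangement of terms $\tfrac{\eta_{g_k,t}^2+C_k^2}{2}+F_{g_k}\eta_{g_k,t}-(\eta_{g_k,t}^{(i)}+C_k^{(i)})(\eta_{g_k,t}+C_k)$ appearing in \eqref{f2}, together with the $\tfrac{1}{2}(\eta_{g_k,t}^{(i)}+C_k^{(i)})^2$ constant folded into $b_{k,t}^{(i)}$. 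This is a first-order Taylor bound on the convex function $\tfrac12(C_k+\eta_{g_k,t})^2$, so it is tight and tangent at the iterate by construction. Adding this convex majorant of the bilinear part to the quadratic minorant of the $-\log\det$ part gives $f_{k,t}^{(i)}$, and summing the two separately-verified inequalities yields (1.1), with equality at the iterate because both pieces are tight there.

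Third, I would prove the gradient-consistency property (1.2). The cleanest route is to invoke the general SCA/MM principle: if $h(\mathbf{z})\ge g(\mathbf{z})$ everywhere with equality at $\mathbf{z}^{(i)}$, and both are differentiable, then $\nabla h(\mathbf{z}^{(i)})=\nabla g(\mathbf{z}^{(i)})$, since $h-g\ge0$ attains a minimum of zero at $\mathbf{z}^{(i)}$, forcing its gradient to vanish there. Thus (1.2) follows immediately from (1.1) for the variables $C_k$ and $\eta_{g_k,t}$ and for each entry of $\mathbf{V}_{g,t}$. For the beamformer variables it is worth a brief direct check that the MMSE choice $\mathbf{U}_{k,t}^{(i)}$ is the stationary point of the WMMSE bound, which guarantees the tangency persists even though $\mathbf{V}$ enters both the numerator and the interference covariance; this is the content of the standard WMMSE-rate equivalence.

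The main obstacle, I expect, is the first step rather than the bilinear completion of squares. Verifying that the quadratic $\sum_g \mathrm{Tr}(\mathbf{V}_{g,t}^H\mathbf{A}_{k,t}^{(i)}\mathbf{V}_{g,t})+2\Re\{\mathrm{Tr}(\mathbf{B}_{k,t}^{(i)}\mathbf{V}_{g_k,t})\}+b_{k,t}^{(i)}$ truly minorizes $-\log\det(\cdots)$ globally, \emph{uniformly over all clusters' beamformers} appearing in $\mathbf{J}_{k,t}$, requires carefully applying the matrix-form MMSE identity so that the interference terms from $g'\neq g_k$ enter only through the positive-semidefinite quadratic form in $\mathbf{A}_{k,t}^{(i)}$ (guaranteeing convexity and the correct inequality direction), while the desired-signal term contributes the linear $\mathbf{B}_{k,t}^{(i)}$ piece. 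Matching the algebra of \eqref{Dt}--\eqref{bt} to the WMMSE expansion, and confirming the inequality direction and tightness simultaneously, is where the bookkeeping is heaviest.
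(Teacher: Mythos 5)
Your proposal matches the paper's own proof in all essentials: the same split of $f_{k,t}^{(i)}$ into a bilinear part majorized by completing the square on $\tfrac{1}{2}(C_k+\eta_{g_k,t})^2$ (the paper's $\psi_{k,t}^{(i)}$, yielding exactly the residual $\tfrac{1}{2}(\eta_{g_k,t}+C_k-\eta_{g_k,t}^{(i)}-C_k^{(i)})^2\geq 0$) and a rate part minorized by the WMMSE-type bound with the fixed MMSE receiver $\mathbf{U}_{k,t}^{(i)}$ of \eqref{Uti} (the paper realizes this as a first-order Taylor expansion of $\log\det(\mathbf{Q}_{k,t}^{-1})$ in the MMSE matrix $\mathbf{Q}_{k,t}$, followed by majorizing $\mathbf{Q}_{k,t}$ with the fixed-receiver MSE matrix $\mathbf{E}_{k,t}$, whose expansion produces exactly $\mathbf{A}_{k,t}^{(i)}$, $\mathbf{B}_{k,t}^{(i)}$, and $b_{k,t}^{(i)}$, with the interference entering only through the positive-semidefinite quadratic form as you anticipated). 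The one departure is in (1.2): the paper verifies the gradient match by explicit differentiation of $\psi_{k,t}^{(i)}$ and of the Taylor/MSE construction, whereas you invoke the standard majorization--tangency principle (the nonnegative, differentiable difference attains its minimum at the iterate, so its gradient vanishes there), which is a valid and shorter route to the same conclusion.
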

\begin{proof}
See Appendix \ref{proof of lm2}.
\end{proof}

\vspace{2mm}

\textcolor{black}
{In particular, property $\textbf{(1.1)}$
means that the left-hand-sides of the original nonconvex constraints
are upper bounded by the left-hand-sides of
the constructed convex constraints;
while property $\textbf{(1.2)}$ means that
the gradients of the left-hand-sides of both the constructed convex constraints
and the original nonconvex constraints are equal
at the expansion points.}
Based on the two properties in Proposition \ref{lm2},
the left-hand-side of
the nonconvex constraint \eqref{Cache2 rate} is upper bounded by
$f_{k,t}^{(i)}\left(C_{k}, \eta_{g_k,t}, \left\{\mathbf{V}_{g,t}\right\}_{g=1}^G\right)$,
and hence \eqref{Cache2 rate}
can be successively approximated by
\begin{eqnarray}\label{convex c2}
f_{k,t}^{(i)}\left(C_{k}, \eta_{g_k,t}, \left\{\mathbf{V}_{g,t}\right\}_{g=1}^G\right)\leq 0,
\nonumber\\
~~\forall k=1, 2,
\ldots, K,~~\forall t=1, 2, \ldots, T,
\end{eqnarray}
which is convex since $f_{k,t}^{(i)}\left(C_{k}, \eta_{g_k,t}, \left\{\mathbf{V}_{g,t}\right\}_{g=1}^G\right)$ is convex quadratic.
With the sequence of convex constraints constructed in \eqref{convex c2},
problem \eqref{Cache2} can be iteratively solved in the SCA framework,
with the $i$-th SCA subproblem written as
\begin{subequations}\label{Cache2-2}
\begin{eqnarray}\label{Cache2-2 obj}
&&\left[\left\{C_k^{(i+1)}\right\},\left\{\mathbf{V}_{g,t}^{(i+1)}, \eta_{g,t}^{(i+1)}\right\}\right]
\nonumber\\
&=&
\arg\min_{\left\{C_{k}\right\}, \left\{\mathbf{V}_{g,t}, \eta_{g,t}\right\}}-\sum_{t=1}^T\sum_{g=1}^GF_g\eta_{g,t},
\end{eqnarray}
\begin{equation}\label{Cache3 power}
\text{s.t.}\
\sum_{g=1}^G\mathrm{Tr}\left(\mathbf{V}_{g,t}
\mathbf{V}_{g,t}^H\right)\leq P_{\text{tot}},~~\forall t=1, 2, \ldots, T,
\end{equation}
\begin{equation}\label{Cache3 cache1}
~~~~\
\sum_{k=1}^KC_k\leq C_{\text{tot}},
\end{equation}
\begin{equation}\label{Cache3 cache2}
~~~~\
0\leq C_k \leq F_{g_k},~~\forall k=1, 2, \ldots, K,
\end{equation}
\begin{eqnarray}\label{Cache3 rate}
~~~~\
f_{k,t}^{(i)}\left(C_{k}, \eta_{g_k,t}, \left\{\mathbf{V}_{g,t}\right\}_{g=1}^G\right)\leq 0,
\nonumber\\
~~\forall k=1, 2, \ldots, K,~~\forall t=1, 2, \ldots, T.
\end{eqnarray}
\end{subequations}

\subsection{First-Order Algorithm in Dual Domain}
While problem \eqref{Cache2-2} can be optimally solved with the interior point method, due to the
large number of variables and constraints (induced by the large sample size $T$),
such a method would incur a heavy computational cost.
To avoid such a heavy computational burden, we strive to develop a first-order algorithm,
which alternatively performs a gradient step and a projection step.
However, since problem \eqref{Cache2-2} is imposed by coupling constraints \eqref{Cache3 power}, \eqref{Cache3 cache1}, and \eqref{Cache3 rate}, the projection onto
\eqref{Cache3 power}-\eqref{Cache3 rate} would be highly complicated.

To address this issue,
we develop another form of the $i$-th SCA subproblem of \eqref{Cache2} by majorizing
the cost function \eqref{Cache2 obj}
with a strongly convex upper bound. Specifically, given any fixed
$\left\{C_{k}^{(i)}\right\}$ and $\left\{\mathbf{V}_{g,t}^{(i)}, \eta_{g,t}^{(i)}\right\}$,
the cost function \eqref{Cache2 obj} can be strongly convexified
by adding three positive quadratic terms:
\begin{eqnarray}\label{strong convex}
&&\Upsilon^{(i)}\left(\left\{C_{k}\right\}, \left\{\mathbf{V}_{g,t}, \eta_{g,t}\right\}\right)
\nonumber\\
&=&
-\sum_{t=1}^T\sum_{g=1}^GF_g\eta_{g,t}
+\frac{\rho_1}{2}\sum_{t=1}^T\sum_{g=1}^G\left(\eta_{g,t}-\eta_{g,t}^{(i)}\right)^2
\nonumber\\&&+\rho_2\sum_{t=1}^T\sum_{g=1}^G\left\Vert\mathbf{V}_{g,t}-\mathbf{V}_{g,t}^{(i)}\right\Vert_F^2
\nonumber\\&&
+\frac{\rho_3}{2}\sum_{k=1}^K\left(C_k-C_k^{(i)}\right)^2,
\end{eqnarray}
where $\rho_1$, $\rho_2$, and $\rho_3$ are fixed positive parameters. Consequently,
\eqref{strong convex} serves as a tight upper bound of \eqref{Cache2 obj},
with their function values equal at $\left\{C_{k}\right\}=\left\{C_{k}^{(i)}\right\}$ and $\left\{\mathbf{V}_{g,t}, \eta_{g,t}\right\}=\left\{\mathbf{V}_{g,t}^{(i)}, \eta_{g,t}^{(i)}\right\}$.
Following the same procedure for convexifying the constraints of \eqref{Cache2} in the last section,
another valid $i$-th SCA subproblem of \eqref{Cache2} can be written as
\begin{eqnarray}\label{Cache3}
&&\left[\left\{C_k^{(i+1)}\right\},\left\{\mathbf{V}_{g,t}^{(i+1)}, \eta_{g,t}^{(i+1)}\right\}\right]
\nonumber\\
&=&\arg\min_{\left\{C_{k}\right\}, \left\{\mathbf{V}_{g,t}, \eta_{g,t}\right\}}\Upsilon^{(i)}\left(\left\{C_{k}\right\}, \left\{\mathbf{V}_{g,t}, \eta_{g,t}\right\}\right),\\
&&\text{s.t.}~~\eqref{Cache3 power}, \eqref{Cache3 cache1},
\eqref{Cache3 cache2}, \eqref{Cache3 rate}.\nonumber
\end{eqnarray}
Based on the strong convexity of $\Upsilon^{(i)}\left(\left\{C_{k}\right\}, \left\{\mathbf{V}_{g,t}, \eta_{g,t}\right\}\right)$, we can derive the dual problem of \eqref{Cache3} in closed-form with the following proposition.

\begin{proposition}\label{smooth}
The dual problem of \eqref{Cache3} is
\begin{subequations}\label{dual p}
\begin{eqnarray}\label{dual}
\max_{\left\{\delta_{t}\right\}, \left\{\lambda_{k,t}\right\}, \mu}
&&\Upsilon^{(i)}\left(\left\{C_{k}^{\diamond}\right\}, \left\{\mathbf{V}_{g,t}^{\diamond}, \eta_{g,t}^{\diamond}\right\}\right)
\nonumber\\
&&+\sum_{t=1}^T\delta_t\left(\sum_{g=1}^G\mathrm{Tr}\left(\mathbf{V}_{g,t}^{\diamond}
\left(\mathbf{V}_{g,t}^{\diamond}\right)^H\right)-P_{\text{tot}}\right)
\nonumber\\&&+\sum_{t=1}^T\sum_{k=1}^K
\lambda_{k,t}f_{k,t}^{(i)}\left(C_{k}^{\diamond}, \eta_{g_k,t}^{\diamond}, \left\{\mathbf{V}_{g,t}^{\diamond}\right\}_{g=1}^G\right)
\nonumber\\
&&+\mu\left(\sum_{k=1}^KC_k^{\diamond}-C_{\text{tot}}\right),
\end{eqnarray}
\begin{eqnarray}\label{dual2}
\text{s.t.}\
\mu\geq0,~~
\delta_{t}\geq0,~~\lambda_{k,t}\geq0,
\nonumber\\
~~\forall k=1, 2, \ldots, K,~~\forall t=1, 2, \ldots, T,
\end{eqnarray}
\end{subequations}
where
$\left\{\mathbf{V}_{g,t}^{\diamond}, \eta_{g,t}^{\diamond}\right\}$
and $\left\{C_{k}^{\diamond}\right\}$ are uniquely given in the following closed forms:
\begin{eqnarray}\label{eta}
\eta_{g,t}^{\diamond}=\eta_{g,t}^{(i)}+\frac{\left(1-\sum_{k\in\mathcal{K}_g}\lambda_{k,t}\right)F_g+
\sum_{k\in\mathcal{K}_g}\lambda_{k,t}C_k^{(i)}}{\rho_1+\sum_{k\in\mathcal{K}_g}\lambda_{k,t}},
\nonumber\\
~~\forall g=1,2,\ldots,G,
~~\forall t=1,2,\ldots,T,
\end{eqnarray}
\begin{eqnarray}\label{V}
\mathbf{V}_{g,t}^{\diamond}&=&\left(\left(\rho_2+\delta_t\right)\mathbf{I}_M
+\sum_{k=1}^K\lambda_{k,t}\mathbf{A}_{k,t}^{(i)}\right)^{-1}
\nonumber\\
&&\cdot\left(\rho_2\mathbf{V}_{g,t}^{(i)}-\sum_{k\in\mathcal{K}_g}\lambda_{k,t}\left(\mathbf{B}_{k,t}^{(i)}
\right)^H\right),\nonumber\\
&&\forall g=1,2,\ldots,G,
~~\forall t=1,2,\ldots,T,
\end{eqnarray}
\begin{eqnarray}\label{Ck}
&&C_k^{\diamond}=\min\Bigg\{\max\Bigg\{\Bigg(C_k^{(i)}+
\nonumber\\
&&\frac{\sum_{t=1}^T\lambda_{k,t}\eta_{g_k,t}^{(i)}-\mu}
{\rho_3+\sum_{t=1}^T\lambda_{k,t}}\Bigg), 0\Bigg\},
F_{g_k}\Bigg\},
\forall k=1,2,\ldots,K.~~~~~
\end{eqnarray}
\end{proposition}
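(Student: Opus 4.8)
The plan is to derive \eqref{dual p} as the \emph{partial} Lagrange dual of \eqref{Cache3}, dualizing only the coupling constraints \eqref{Cache3 power}, \eqref{Cache3 cache1}, and \eqref{Cache3 rate} through the multipliers $\{\delta_t\}$, $\mu$, and $\{\lambda_{k,t}\}$, while retaining the box constraint \eqref{Cache3 cache2} inside the inner minimization. First I would write the partial Lagrangian $L = \Upsilon^{(i)} + \sum_t \delta_t(\sum_g \mathrm{Tr}(\mathbf{V}_{g,t}\mathbf{V}_{g,t}^H) - P_{\text{tot}}) + \sum_{t,k}\lambda_{k,t} f_{k,t}^{(i)} + \mu(\sum_k C_k - C_{\text{tot}})$ and observe that, for any fixed $\delta_t,\lambda_{k,t},\mu \ge 0$, it is strongly convex in the primal variables: $\Upsilon^{(i)}$ contributes the strongly convex proximal terms through $\rho_1,\rho_2,\rho_3>0$, and each $\lambda_{k,t}f_{k,t}^{(i)}$ is a nonnegatively weighted convex quadratic, so the constrained inner minimizer is unique. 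The key simplification is that $L$ \emph{separates} additively across the blocks $\{\eta_{g,t}\}$, $\{\mathbf{V}_{g,t}\}$, and $\{C_k\}$: in \eqref{f2} the cross term $-(\eta_{g_k,t}^{(i)}+C_k^{(i)})(\eta_{g_k,t}+C_k)$ splits into separate linear terms in $\eta_{g_k,t}$ and $C_k$ (the coefficients $\eta^{(i)},C^{(i)}$ being constants), so there is no $\eta$--$C$ coupling, and each block can be minimized independently.

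Next I would carry out the three block minimizations. For $\eta_{g,t}$, collecting the linear term $-F_g\eta_{g,t}$ from \eqref{Cache2 obj}, the proximal term $\tfrac{\rho_1}{2}(\eta_{g,t}-\eta_{g,t}^{(i)})^2$, and the $\eta$-dependent parts of $\lambda_{k,t}f_{k,t}^{(i)}$ over $k\in\mathcal{K}_g$ yields a scalar strongly convex quadratic; setting its derivative to zero gives $(\rho_1+\sum_{k\in\mathcal{K}_g}\lambda_{k,t})(\eta_{g,t}-\eta_{g,t}^{(i)}) = (1-\sum_{k\in\mathcal{K}_g}\lambda_{k,t})F_g + \sum_{k\in\mathcal{K}_g}\lambda_{k,t}C_k^{(i)}$, i.e. \eqref{eta}. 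For $\mathbf{V}_{g,t}$, the relevant terms are $\rho_2\|\mathbf{V}_{g,t}-\mathbf{V}_{g,t}^{(i)}\|_F^2$, $\delta_t\mathrm{Tr}(\mathbf{V}_{g,t}\mathbf{V}_{g,t}^H)$, the quadratic trace term $\lambda_{k,t}\mathrm{Tr}(\mathbf{V}_{g,t}^H\mathbf{A}_{k,t}^{(i)}\mathbf{V}_{g,t})$ for \emph{all} $k$, and the linear term $2\lambda_{k,t}\Re\{\mathrm{Tr}(\mathbf{B}_{k,t}^{(i)}\mathbf{V}_{g,t})\}$ for $k\in\mathcal{K}_g$; taking the Wirtinger gradient with respect to $\mathbf{V}_{g,t}^*$ (using $\nabla_{\mathbf{V}^*}2\Re\{\mathrm{Tr}(\mathbf{B}\mathbf{V})\}=\mathbf{B}^H$) and equating to zero gives the linear matrix equation whose solution is \eqref{V}. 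For $C_k$, collecting $\tfrac{\rho_3}{2}(C_k-C_k^{(i)})^2$, $\mu C_k$, and the $C_k$-parts of $\lambda_{k,t}f_{k,t}^{(i)}$ summed over $t$ gives a one-dimensional strongly convex quadratic on $[0,F_{g_k}]$; its unconstrained stationary point is $C_k^{(i)}+(\sum_t\lambda_{k,t}\eta_{g_k,t}^{(i)}-\mu)/(\rho_3+\sum_t\lambda_{k,t})$, and the constrained minimizer is its Euclidean projection onto $[0,F_{g_k}]$, which is exactly the clamp in \eqref{Ck}. Substituting the three minimizers back into $L$ produces the dual objective \eqref{dual}, and the sign constraints \eqref{dual2} are inherited from dualizing inequality constraints.

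The main obstacle is the $\mathbf{V}_{g,t}$ step: besides the complex matrix differentiation, I must establish that the coefficient matrix $(\rho_2+\delta_t)\mathbf{I}_M+\sum_{k=1}^K\lambda_{k,t}\mathbf{A}_{k,t}^{(i)}$ is positive definite so that \eqref{V} is well defined and the minimizer unique. This reduces to showing each $\mathbf{A}_{k,t}^{(i)}$ in \eqref{Dt} is Hermitian positive semidefinite, which follows from the fact that $\mathbf{U}_{k,t}^{(i)}$ in \eqref{Uti} is the MMSE receiver, so that $\mathbf{I}_d-(\mathbf{U}_{k,t}^{(i)})^H\mathbf{H}_{k,t}\mathbf{V}_{g_k,t}^{(i)}$ is the Hermitian positive definite MMSE error covariance established in the proof of Proposition~\ref{lm2}; writing $\mathbf{A}_{k,t}^{(i)}$ as a congruence of its inverse then yields positive semidefiniteness, and with $\rho_2>0$ and $\delta_t,\lambda_{k,t}\ge 0$ the coefficient matrix is positive definite. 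Once this structural fact is in hand, the scalar $\eta$ and $C_k$ minimizations, the additive separability, and the back-substitution are all routine.
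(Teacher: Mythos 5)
Your proposal follows essentially the same route as the paper's Appendix B: form the partial Lagrangian dualizing only \eqref{Cache3 power}, \eqref{Cache3 cache1}, and \eqref{Cache3 rate}, exploit the additive separability to decompose the inner minimization into the $\eta$, $\mathbf{V}$, and $C$ blocks, solve each strongly convex block in closed form (with projection onto $[0,F_{g_k}]$ for $C_k$), and substitute back to obtain \eqref{dual}. Your explicit verification that $\mathbf{A}_{k,t}^{(i)}$ is Hermitian positive semidefinite (via the congruence with the inverse MMSE error covariance) is a detail the paper leaves implicit when asserting strong convexity of the $\mathbf{V}_{g,t}$ subproblem, and it is a correct and worthwhile addition.
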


\begin{proof}
See Appendix \ref{proof of smooth}.
\end{proof}

\vspace{2mm}

Denoting the the dual objective
function in \eqref{dual} as $D\left(\left\{\delta_{t}\right\}, \left\{\lambda_{k,t}\right\}, \mu\right)$,
and noticing that the values of $\left\{\mathbf{V}_{g,t}^{\diamond}, \eta_{g,t}^{\diamond}\right\}$
and $\left\{C_{k}^{\diamond}\right\}$
are uniquely determined by \eqref{eta}-\eqref{Ck},
we can obtain the partial derivatives as
\begin{equation}\label{gradient}
\begin{cases}
\frac{\partial D}{\partial \delta_t}=\sum_{g=1}^G\mathrm{Tr}\left(\mathbf{V}^{\diamond}_{g,t}
\left(\mathbf{V}^{\diamond}_{g,t}\right)^H\right)-P_{\text{tot}},
\cr \frac{\partial D}{\partial \lambda_{k,t}}=f_{k,t}^{(i)}\left(C_{k}^{\diamond}, \eta_{g_k,t}^{\diamond}, \left\{\mathbf{V}_{g,t}^{\diamond}\right\}_{g=1}^G\right),
\cr \frac{\partial D}{\partial \mu}=\sum_{k=1}^KC_k^{\diamond}-C_{\text{tot}},
\end{cases}
\end{equation}
thus the projected gradient step increasing $D\left(\left\{\delta_{t}\right\}, \left\{\lambda_{k,t}\right\}, \mu\right)$
can be expressed as
\begin{eqnarray}\label{gstep}
\begin{cases}
\left(\delta_{t}+\beta_s \left(\sum_{g=1}^G\mathrm{Tr}\left(\mathbf{V}^{\diamond}_{g,t}
\left(\mathbf{V}^{\diamond}_{g,t}\right)^H\right)
-P_{\text{tot}}\right)\right)^{+},
\cr
\left(\lambda_{k,t}+\beta_s f_{k,t}^{(i)}\left(C_{k}^{\diamond}, \eta_{g_k,t}^{\diamond}, \left\{\mathbf{V}_{g,t}^{\diamond}\right\}_{g=1}^G\right)\right)^{+},
\cr
\left(\mu+\beta_s \left(\sum_{k=1}^KC_k^{\diamond}-C_{\text{tot}}\right)\right)^{+},
\end{cases}
\end{eqnarray}
where $\beta_s$ is the step size at the $s$-th iteration, and $(\cdot)^+\triangleq\max(\cdot, 0)$ is the non-negativity projection over \eqref{dual2}.

\begin{algorithm}[t!]
\caption{First-Order Algorithm for Solving \eqref{Cache3}}
\begin{algorithmic}[1]\footnotesize
\State Compute $\mathbf{A}_{k,t}^{(i)}$, $\mathbf{B}_{k,t}^{(i)}$, and $b_{k,t}^{(i)}$ with \eqref{Dt}-\eqref{bt},
$\forall k=1,2,\ldots,K,~~\forall t=1,2,\ldots,T.$\\
Initialize $\delta_t=1$, $\forall t=1,2,\ldots,T$;
$\lambda_{k,t}=1$, $\forall k=1,2,\ldots,K$, $\forall t=1,2,\ldots,T$; $\mu=1$.\\
$\textbf{repeat}$ ($s=1,2,\ldots$)\\
Update $\left\{\mathbf{V}^{\diamond}_{g,t}, \eta^{\diamond}_{g,t}\right\}$
and $\left\{C^{\diamond}_{k}\right\}$
with \eqref{eta}-\eqref{Ck}.\\
Update $\left\{\delta_{t}\right\}$, $\left\{\lambda_{k,t}\right\}$, and $\mu$
with \eqref{gstep}.\\
$\textbf{until}$ convergence\\
Output $\left\{\mathbf{V}^{(i+1)}_{g,t}, \eta^{(i+1)}_{g,t}\right\}=\left\{\mathbf{V}^{\diamond}_{g,t}, \eta^{\diamond}_{g,t}\right\}$
and $\left\{C^{(i+1)}_{k}\right\}=\left\{C^{\diamond}_{k}\right\}$.
\end{algorithmic}
\end{algorithm}

By iteratively updating  $\left\{\delta_{t}\right\}$, $\left\{\lambda_{k,t}\right\}$, and $\mu$ with
\eqref{gstep}, we can obtain the optimal solution to the dual problem \eqref{dual p}.
Correspondingly, the optimal $\left\{\mathbf{V}_{g,t}, \eta_{g,t}\right\}$
and $\left\{C_{k}\right\}$ to the primal problem \eqref{Cache3} is given by substituting the optimal
$\left\{\delta_{t}\right\}$, $\left\{\lambda_{k,t}\right\}$, and $\mu$ into
\eqref{eta}-\eqref{Ck}. We summarize this procedure for solving problem \eqref{Cache3}
in Algorithm 1, which is guaranteed to
converge to the global optimum of \eqref{dual p} at a rate of $\mathcal{O}\left(1/s\right)$, if the step size $\beta_s$ is smaller than the inverse of the Lipschitz constant of $\nabla D$ \cite{Beck2009-2}. Moreover, notice that the primal problem \eqref{Cache3} is convex, thus
the convergent optimum of \eqref{dual p} is also the global optimum of \eqref{Cache3}, provided that
\eqref{Cache3} is strictly feasible~\cite{Boyd2004}.

\begin{figure}[t!]
\begin{center}
  \includegraphics[width=0.35\textwidth]{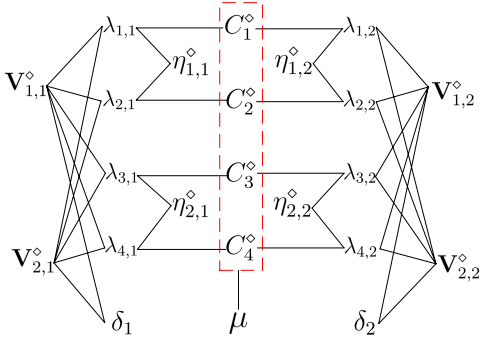}
  \caption{Parallel structure of Algorithm 1 when $T=2$, $G=2$, $K=4$, $\mathcal{K}_1=\left\{1,2\right\}$, and $\mathcal{K}_2=\left\{3,4\right\}$.}\label{PARALLEL}
\end{center}
\vspace{-0.2cm}
\end{figure}

Notice that each iteration of Algorithm 1 can be executed in parallel. In particular,
both the $2GT+K$ primal variables in line 4
and the $(K+1)T+1$ dual variables in line 5 can be updated in parallel.
An example of the parallel structure of Algorithm 1
is shown in Fig. \ref{PARALLEL}, where
$T=2$, $G=2$, $K=4$, $\mathcal{K}_1=\left\{1,2\right\}$, and $\mathcal{K}_2=\left\{3,4\right\}$.
The $12$ primal variables $\mathbf{V}^{\diamond}_{1,1}$,
$\mathbf{V}^{\diamond}_{2,1}$, $\mathbf{V}^{\diamond}_{1,2}$, $\mathbf{V}^{\diamond}_{2,2}$, $\eta_{1,1}^{\diamond}$,
$\eta_{2,1}^{\diamond}$,  $\eta_{1,2}^{\diamond}$,  $\eta_{2,2}^{\diamond}$,
$C_1^{\diamond}$, $C_2^{\diamond}$, $C_3^{\diamond}$, and $C_4^{\diamond}$ can be simultaneously updated. Furthermore, the update of each primal variable only depends on a few dual variables (e.g.,
the update of $\eta_{1,1}^{\diamond}$ only
depends on $\lambda_{1,1}$ and $\lambda_{2,1}$), thus the message passing overhead is small.
Similarly, the dual variables can also be simultaneously updated and each
depends on only a few primal variables.
Due to this parallel structure, Algorithm 1 has the potential of leveraging the modern multi-core multi-thread processor architecture for speeding up the computation.

\subsection{Acceleration with Momentum Technique}
\begin{algorithm}[t!]
\caption{Accelerated First-Order Algorithm for Solving \eqref{Cache3}}
\begin{algorithmic}[1]\footnotesize
\State Compute $\mathbf{A}_{k,t}^{(i)}$, $\mathbf{B}_{k,t}^{(i)}$, and $b_{k,t}^{(i)}$ with \eqref{Dt}-\eqref{bt},
$\forall k=1,2,\ldots,K,~~\forall t=1,2,\ldots,T.$\\
Initialize $\delta_t=1$, $\forall t=1,2,\ldots,T$;
$\lambda_{k,t}=1$, $\forall k=1,2,\ldots,K$, $\forall t=1,2,\ldots,T$; $\mu=1$; $\theta^{(0)}=1.$\\
$\textbf{repeat}$ ($s=1,2,\ldots$)\\
Update $\theta^{(s)}$ with \eqref{theta}.\\
Update $\left\{\mathbf{V}^{\diamond}_{g,t}, \eta^{\diamond}_{g,t}\right\}$
and $\left\{C^{\diamond}_{k}\right\}$
with \eqref{eta}-\eqref{Ck}.\\
Update $\left\{\tilde{\delta}_{t}^{(s)}\right\}$, $\left\{\tilde{\lambda}_{k,t}^{(s)}\right\}$, and $\tilde{\mu}^{(s)}$
with \eqref{gstep1}.\\
Update $\left\{\delta_{t}\right\}$, $\left\{\lambda_{k,t}\right\}$, and $\mu$
with \eqref{gstep2}.\\
$\textbf{until}$ convergence\\
Output $\left\{\mathbf{V}^{(i+1)}_{g,t}, \eta^{(i+1)}_{g,t}\right\}=\left\{\mathbf{V}^{\diamond}_{g,t}, \eta^{\diamond}_{g,t}\right\}$
and $\left\{C^{(i+1)}_{k}\right\}=\left\{C^{\diamond}_{k}\right\}$.
\end{algorithmic}
\end{algorithm}

Although Algorithm 1 only involves the gradient information and can be executed in parallel,
as a first-order algorithm, it may require a large number of iterations to converge.
To improve the convergence speed, we further apply the
momentum technique \cite{Beck2009-3} to accelerate Algorithm 1.
In particular, the projected gradient step in \eqref{gstep}
is modified by updating $\tilde{\delta}_t^{(s)}$, $\tilde{\lambda}_{k,t}^{(s)}$,
and $\tilde{\mu}^{(s)}$:
\begin{eqnarray}\label{gstep1}
\begin{cases}
\left(\delta_{t}+\beta_s \left(\sum_{g=1}^G\mathrm{Tr}\left(\mathbf{V}^{\diamond}_{g,t}
\left(\mathbf{V}^{\diamond}_{g,t}\right)^H\right)-P_{\text{tot}}\right)\right)^{+},
\cr
\left(\lambda_{k,t}+\beta_s f_{k,t}^{(i)}\left(C_{k}^{\diamond}, \eta_{g_k,t}^{\diamond}, \left\{\mathbf{V}_{g,t}^{\diamond}\right\}_{g=1}^G\right)\right)^{+},
\cr
\left(\mu+\beta_s \left(\sum_{k=1}^KC_k^{\diamond}-C_{\text{tot}}\right)\right)^{+},
\end{cases}
\end{eqnarray}
\begin{eqnarray}\label{gstep2}
\begin{cases}
\delta_t\leftarrow\tilde{\delta}_t^{(s)}+\frac{\theta^{(s-1)}-1}{\theta^{(s)}}\left(\tilde{\delta}_t^{(s)}-\tilde{\delta}_t^{(s-1)}\right),
\cr
\lambda_{k,t}\leftarrow\tilde{\lambda}_{k,t}^{(s)}+\frac{\theta^{(s-1)}-1}{\theta^{(s)}}
\left(\tilde{\lambda}_{k,t}^{(s)}-\tilde{\lambda}_{k,t}^{(s-1)}\right),
\cr
\mu\leftarrow\tilde{\mu}^{(s)}+\frac{\theta^{(s-1)}-1}{\theta^{(s)}}
\left(\tilde{\mu}^{(s)}-\tilde{\mu}^{(s-1)}\right),
\end{cases}
\end{eqnarray}
where $\theta^{(s)}$ is the weighting parameter to dynamically control the momentums $\tilde{\delta}_t^{(s)}-\tilde{\delta}_t^{(s-1)}$,
$\tilde{\lambda}_{k,t}^{(s)}-\tilde{\lambda}_{k,t}^{(s-1)}$, and
$\tilde{\mu}^{(s)}-\tilde{\mu}^{(s-1)}$. To achieve fast convergence,
$\theta^{(s)}$ is updated by \cite{Beck2009-3}
\begin{equation}\label{theta}
\theta^{(s)}=\frac{1+\sqrt{1+4\left(\theta^{(s-1)}\right)^2}}{2}.
\end{equation}
By using \eqref{gstep1}-\eqref{theta},
the accelerated first-order algorithm for solving problem \eqref{Cache3}
is summarized in Algorithm 2. The key insight of the acceleration lies in
the momentums $\tilde{\delta}_t^{(s)}-\tilde{\delta}_t^{(s-1)}$,
$\tilde{\lambda}_{k,t}^{(s)}-\tilde{\lambda}_{k,t}^{(s-1)}$, and
$\tilde{\mu}^{(s)}-\tilde{\mu}^{(s-1)}$ in \eqref{gstep2} (without these momentums,
Algorithm 2 would reduce to Algorithm 1). These momentums
utilize previous updates to generate an overshoot, so that the update using
\eqref{gstep1} and \eqref{gstep2} in Algorithm 2
is more aggressive than the conventional gradient step \eqref{gstep} in Algorithm 1.
On the other hand, to ensure these overshoots to be well behaved,
the momentums are controlled by a sequence of weighting parameters
$\left\{\theta^{(s)}\right\}$.
With $\left\{\theta^{(s)}\right\}$ updated using \eqref{theta}, Algorithm 2
is guaranteed to converge to the the global optimum
of \eqref{dual p} at a rate of $\mathcal{O}\left(1/s^2\right)$ \cite{Beck2009-3}.

\subsection{Overall Algorithm for Solving \eqref{Cache1}}
With the $i$-th SCA subproblem \eqref{Cache3} solved by Algorithm 2,
the overall algorithm for solving the cache size allocation problem \eqref{Cache1} is summarized in Algorithm 3.
Since the constructed convex constraint \eqref{convex c2} satisfies properties $\textbf{(1.1)}$ and $\textbf{(1.2)}$ of Proposition \ref{lm2},
and the constructed upper bound $\Upsilon^{(i)}\left(\left\{C_{k}\right\}, \left\{\mathbf{V}_{g,t}, \eta_{g,t}\right\}\right)$ tightly approximates the cost function \eqref{Cache2 obj},
Algorithm~3 is guaranteed to converge to
a stationary point of problem~\eqref{Cache2} \cite{Razaviyayn2014},
which is equivalent to problem~\eqref{Cache1}.

Notice that Algorithm 3 is based on the SCA framework, thus it requires to be
initialized from a feasible point.
For simplicity, as shown in line 1,
we provide a feasible initial point for Algorithm 3 by
equally allocating the total transmit power $P_{\text{tot}}$ and the total cache sizes $C_{\text{tot}}$ to each $\mathbf{V}_{g,t}^{(0)}$ and $C_k^{(0)}$ respectively, and
$\eta_{g,t}^{(0)}$ is obtained from \eqref{Cache2 rate} by substituting
$\mathbf{V}_{g,t}=\mathbf{V}_{g,t}^{(0)}$ and $C_k=C_k^{(0)}$.

\begin{algorithm}[t!]
\caption{Overall Algorithm for Solving \eqref{Cache1}}
\begin{algorithmic}[1]\footnotesize
\State Initialization:
\begin{enumerate}
\item[]
$\mathbf{V}_{g,t}^{(0)}=\sqrt{\frac{P_{\text{tot}}}{GMd}}\mathbf{1}_{M\times d},~~\forall g=1,2,\ldots,G,~~\forall t=1,2,\ldots,T$;
\item[]
$C_k^{(0)}=C_{\text{tot}}/K,~~\forall k=1,2,\ldots, K$;
\item[]
$\eta_{g,t}^{(0)}=\min_{k\in\mathcal{K}_g}
\Big\{\frac{1}{F_g-C_k^{(0)}}\log\det\Big(\mathbf{I}_N+\mathbf{H}_{k,t}\mathbf{V}_{g,t}^{(0)}$
$\cdot\left(\mathbf{V}_{g,t}^{(0)}\right)^H\mathbf{H}_{k,t}^H
\mathbf{J}_{k,t}^{(0)}\Big)\Big\},~~\forall g=1,2,\ldots,G,~~\forall t=1,2,\ldots,T$.
\end{enumerate}\\
$\textbf{repeat}$ ($i=0,1,\ldots$)\\
Solve the $i$-th SCA subproblem \eqref{Cache3} with Algorithm 2.\\
$\textbf{until}$ convergence
\end{algorithmic}
\end{algorithm}

\begin{figure}[t!]
	\begin{center}
        {\includegraphics[width=0.47\textwidth]{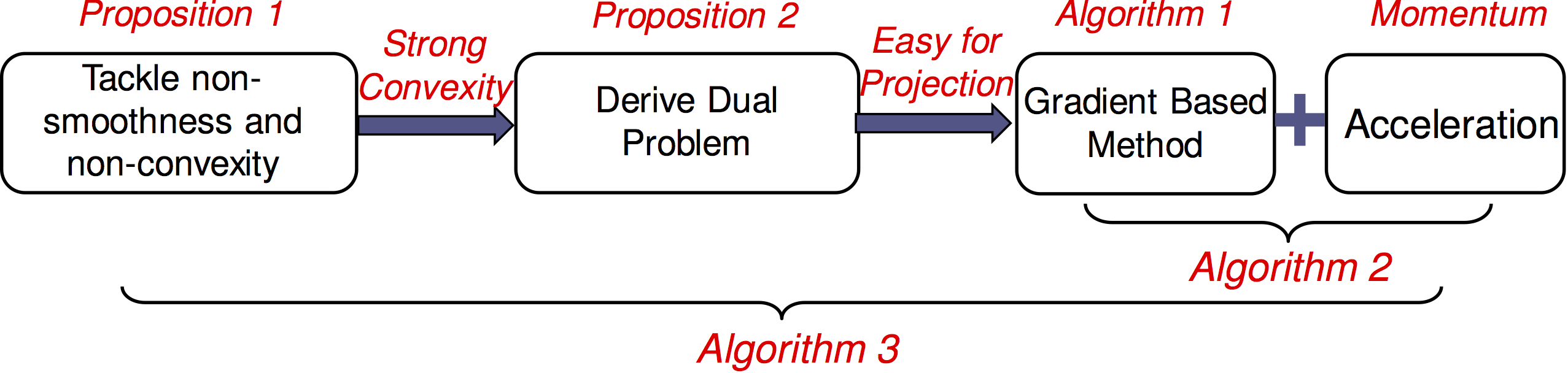}
		\caption{Proposed algorithm framework for the cache allocation problem \eqref{Cache1}.}\label{f1-1}}
	\end{center}
	\vspace{-0.2cm}
\end{figure}

\textcolor{black}{
In summary,
the extension from the single-cluster scenario [8]
to the more general multi-cluster scenario
brings two new challenges. First, the inter-cluster interference induces non-convexity in the optimization problem. Although SCA provides a general framework to tackle the non-convexity, it is still challenging to convexify the nonconvex constraint \eqref{Cache2 rate}, which is not in the common difference of convex form. To tackle this challenge, we establish Proposition~1,
which tightly approximates \eqref{Cache2 rate} by quadratically convexifying its left-hand-side.
Second, even after tackling the non-convexity, it is still challenging to solve the resulting SCA subproblem
\eqref{Cache2-2} due to its large numbers of coupling constraints,
which make the first-order projected gradient descent method not applicable.
To get around the coupling constraints, we further establish Proposition~2, which exploits
the strong convexity of \eqref{strong convex}, so that the dual problem \eqref{dual p} can be efficiently solved by the proposed accelerated gradient based method.
The overall framework for solving the cache allocation problem
\eqref{Cache1} is depicted in Fig. \ref{f1-1}.}

\section{MCMB Design for Content Delivery}
To evaluate the performance of the proposed cache size allocation in Section III,
we further design the MCMB for content delivery
with fixed $\left\{C_k\right\}$.
Since only the multicast beamformers $\left\{\mathbf{V}_g\right\}$
are optimized for maximizing the instantaneous downloading sum-rate $R_{\text{sum}}$ in \eqref{DRsum}, the MCMB design problem becomes
\begin{subequations}\label{MCMB0}
\begin{equation}\label{MCMB0 obj}
\max_{\left\{\mathbf{V}_g\right\}}\sum_{g=1}^G\min_{k\in\mathcal{K}_g}
\left\{\frac{F_g}{F_g-C_k}\log\det\left(\mathbf{I}_N
+\mathbf{H}_k\mathbf{V}_{g}\mathbf{V}_{g}^H\mathbf{H}_k^H
\mathbf{J}_k\right)\right\},
\end{equation}
\begin{equation}\label{MCMB0 power}
\text{s.t.}\
\sum_{g=1}^G\mathrm{Tr}\left(\mathbf{V}_{g}
\mathbf{V}_{g}^H\right)\leq P_{\text{tot}},
\end{equation}
\end{subequations}
which can be equivalently transformed into the following smooth problem:
\begin{subequations}\label{MCMB1}
\begin{equation}\label{MCMB1 obj}
\min_{\left\{\mathbf{V}_g, \eta_g\right\}}-\sum_{g=1}^GF_g\eta_g,
\end{equation}
\begin{equation}\label{MCMB1 power}
\text{s.t.}\
\sum_{g=1}^G\mathrm{Tr}\left(\mathbf{V}_{g}
\mathbf{V}_{g}^H\right)\leq P_{\text{tot}},
\end{equation}
\begin{eqnarray}\label{MCMB1 rate}
(F_{g_k}-C_k)\eta_{g_k}
-\log\det\left(\mathbf{I}_N
+\mathbf{H}_k\mathbf{V}_{g_k}\mathbf{V}_{g_k}^H\mathbf{H}_k^H
\mathbf{J}_k\right)
\nonumber\\
\leq0,
\forall k=1, 2, \ldots, K.
\end{eqnarray}
\end{subequations}
Notice that problem \eqref{MCMB1} is a simplified version of \eqref{Cache2}
when $T=1$ and $\left\{C_k\right\}$ are fixed.
Following similar derivations
to that of the cache allocation problem in the last section (see Appendix \ref{derive MCMB}), the $i$-th SCA subproblem of \eqref{MCMB1} can be written as
\begin{subequations}\label{MCMB2}
\begin{equation}\label{MCMB2 obj}
\left\{\mathbf{V}_g^{(i+1)}, \eta_g^{(i+1)}\right\}=
\arg \min_{\left\{\mathbf{V}_g, \eta_g\right\}}-\sum_{g=1}^GF_g\eta_g,
\end{equation}
\begin{equation}\label{MCMB2 power}
\text{s.t.}\
\sum_{g=1}^G\mathrm{Tr}\left(\mathbf{V}_{g}
\mathbf{V}_{g}^H\right)\leq P_{\text{tot}}.
\end{equation}
\begin{equation}\label{MCMB2 rate}
~~~~\
(F_{g_k}-C_k)\eta_{g_k}+h_k^{(i)}\left(\left\{\mathbf{V}_g\right\}\right)\leq0,~~\forall k=1, 2, \ldots, K,
\end{equation}
\end{subequations}
where $h_k^{(i)}\left(\left\{\mathbf{V}_g\right\}\right)$ is given in \eqref{f} of Appendix \ref{derive MCMB}.
Since subproblem \eqref{MCMB2} is convex, it can be optimally solved by the
standard optimization toolbox based on the interior-point method (e.g., CVX \cite{Grant2013}).
Notice that different from \eqref{Cache2-2}, the SCA subproblem \eqref{MCMB2}
involves only a single channel realization, thus the interior-point method
would not induce a heavy computational burden.

\begin{algorithm}[t!]
\caption{Proposed Algorithm for Solving \eqref{MCMB0}}
\begin{algorithmic}[1]\footnotesize
\State Initialize $\mathbf{V}_{g}^{(0)}=\sqrt{\frac{P_{\text{tot}}}{GMd}}\mathbf{1}_{M\times d},~~\forall g=1,2,\ldots,G$.\\
$\textbf{repeat}$ ($i=0,1,\ldots$)\\
Compute $\mathbf{\hat{A}}_{k}^{(i)}$, $\mathbf{\hat{B}}_{k}^{(i)}$, and
$\hat{b}_{k}^{(i)}$ with \eqref{D}-\eqref{b},
$\forall k=1,2,\ldots,K$.\\
Solve the $i$-th SCA subproblem \eqref{MCMB2} with CVX.\\
$\textbf{until}$ convergence
\end{algorithmic}
\end{algorithm}

\textcolor{black}{
The overall framework for solving the MCMB design problem
\eqref{MCMB0} is depicted in Fig.~\ref{f2}, and the algorithm
is summarized in Algorithm 4.}
Without loss of generality, the initialization of Algorithm 4 is obtained by equally allocating the total transmit power $P_{\text{tot}}$ to each $\mathbf{V}_{g}^{(0)}$.
Since the constructed convex constraint in
\eqref{convex c1} satisfies properties $\textbf{C.a}$ and $\textbf{C.b}$ in Appendix \ref{derive MCMB},
Algorithm 4 is guaranteed to converge to a stationary point of problem \eqref{MCMB1} \cite{Razaviyayn2014}.

\begin{figure}[t!]
	\begin{center}
        {\includegraphics[width=0.3\textwidth]{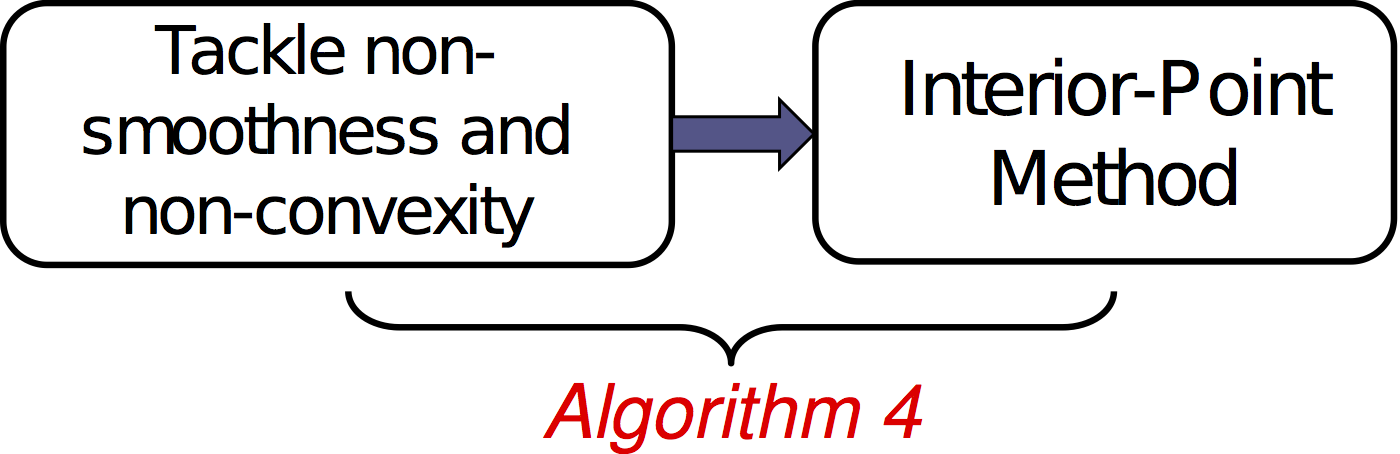}
		\caption{Proposed algorithm framework for the content delivery problem \eqref{MCMB0}.}\label{f2}}
	\end{center}
	\vspace{-0.2cm}
\end{figure}

\textcolor{black}{
\section{Caching Multiple Files with Different Popularities}
While the files were assumed to have the same popularity in Section III,
it can be extended to the more general case with multiple files having different popularities.
In particular, denote $f_g\in\mathcal{F}_g=
\left\{c_{g,1},c_{g,2},\ldots, c_{g,\left\vert
\mathcal{F}_g\right\vert}\right\}$ as a potential file to be requested by a BS group $g$
with the file's request probability $p_{f_g}$ (with $\sum_{f_g\in\mathcal{F}_g}{p_{f_g}}=1$).
Let $\mathbf{f}\triangleq\left[f_1, f_2, \ldots, f_G\right]^H$ and
$\mathcal{F}\triangleq\mathcal{F}_1\times\mathcal{F}_2\times\ldots\times\mathcal{F}_G$,
then the request probability of $\mathbf{f}$ is
$p_\mathbf{f}=\prod_{g=1}^Gp_{f_g}$ and we have $\sum_{\mathbf{f}\in\mathcal{F}}p_\mathbf{f}=1$.
Each BS $k$ caches $C_{k,f_{g_k}}/F_{f_{g_k}}$ of the file $f_{g_k}$.
Consequently, under the total cache size constraint $\sum_{k=1}^{K}\sum_{f_{g_k}\in\mathcal{F}_{g_k}}C_{k,f_{g_k}}\leq
C_{\text{tot}}$, the cache size allocation problem with different popularities
can be formulated as
\begin{subequations}\label{MULTI}
\begin{eqnarray}\label{MULTI1}
&&\max_{\left\{C_{k,f_{g}}\right\}, \left\{\mathbf{V}_{g,t,\mathbf{f}}\right\}}
\sum_{t=1}^T\sum_{g=1}^G\sum_{\mathbf{f}\in\mathcal{F}}p_\mathbf{f}\min_{k\in\mathcal{K}_g}
\Bigg\{\frac{F_{f_g}}{F_{f_g}-C_{k,f_{g}}}
\nonumber\\
&&\cdot\log\det\left(\mathbf{I}_N
+\mathbf{H}_{k,t,\mathbf{f}}\mathbf{V}_{g,t,\mathbf{f}}\mathbf{V}_{g,t,\mathbf{f}}^H\mathbf{H}_{k,t,\mathbf{f}}^H
\mathbf{J}_{k,t,\mathbf{f}}\right)\Bigg\},~~~~~
\end{eqnarray}
\begin{equation}\label{MULTI2}
\text{s.t.}\
\sum_{g=1}^G\mathrm{Tr}\left(\mathbf{V}_{g,t,\mathbf{f}}
\mathbf{V}_{g,t,\mathbf{f}}^H\right)\leq P_{\text{tot}},~~\forall \mathbf{f}\in\mathcal{F},~~\forall t=1, 2, \ldots, T,
\end{equation}
\begin{equation}\label{MULTI3}
~~~~\
\sum_{k=1}^{K}\sum_{f_{g_k}\in\mathcal{F}_{g_k}}C_{k,f_{g_k}}\leq
C_{\text{tot}},~~~~~~~~~~~~~~~~~~~~~~~~~~~~~~~~~~~~~~
\end{equation}
\begin{equation}\label{MULTI4}
~~~~\
0\leq C_{k,f_{g_k}} \leq F_{f_{g_k}},~~\forall f_{g_k}\in\mathcal{F}_{g_k},~~\forall k=1, 2, \ldots, K,
\end{equation}
\end{subequations}
where
$\left\{\mathbf{H}_{k,t,\mathbf{f}}\right\}_{k=1}^K$ and $\left\{\mathbf{V}_{g,t,\mathbf{f}}\right\}_{g=1}^G$ are the corresponding channel matrices
and beamformers, respectively.
Notice that
when $p_\mathbf{f}$ with different $\mathbf{f}$ are equal,
\eqref{MULTI} will reduce to \eqref{Cache1}.
With the only difference between \eqref{MULTI} and \eqref{Cache1}
lying in the weighting parameter $p_\mathbf{f}$ in~\eqref{MULTI1},
\eqref{MULTI} can be solved using the same
algorithm framework as Algorithm 3 proposed in Section III.
The corresponding algorithm is summarized as Algorithm~5 in Appendix~\ref{al5}.}
\textcolor{black}{
Finally, notice that an unpopular file would lead to a small $C_{k,f_{g_k}}$.
But no matter $C_{k,f_{g_k}}$ is big or small, for content delivery, as shown in Section IV, whenever a file (even an unpopular file) is requested by the BSs, the computation center must provide multicast transmission immediately. Therefore, there is no delay during the transmission.}

\section{Simulation Results and Discussions}
In this section, we evaluate the performance of the proposed cache size allocation
in terms of downloading sum-rate through simulations.
All simulations are performed on MATLAB R2017b running on a Windows x64 machine with 3.3 GHz CPU and 8 GB RAM.

We consider a downlink C-RAN with $G=4$ clusters of BSs, where each cluster consists of
$3$ BSs. In particular, the distances between the computation center and the BSs
in the $4$ clusters are $\left\{160,260,360\right\}$,
$\left\{200, 280, 360\right\}$,
$\left\{160,280,400\right\}$, and
$\left\{240,320,400\right\}$ in meters, respectively.
The computation center is equipped with $M=20$ antennas and each BS is equipped with $N=2$ antennas.
The path loss at distance $D$ kilometers follows $128.1+37.6\log_{10}(D)$ in dB,
and the small-scale channel is subject to Rayleigh fading.
We use $T=100$ sets of channel realizations
for solving the cache allocation problem~\eqref{Cache1}.
The maximum transmit power at the computation center is $P_{\text{tot}}=40$ W
and the antenna gain is $17$ dBi \cite{Dai2018}. The backhaul channel bandwidth is $20$ MHz and the background noise power spectral density is $-150$ dBm/Hz. The content size is $F_g=100, \forall g=1,2,3,4$, and the budget of the total cache size is $C_{\text{tot}}=120$.

The step size $\beta_s$ in Algorithm 1 and Algorithm 2 is fixed as $1$,
and the parameters in \eqref{strong convex} are fixed as $\rho_1=10^5$, $\rho_2=10^4$, and $\rho_3=1$.
The iterations of Algorithm 1, Algorithm 2, and Algorithm 4
terminate when the relative changes of the corresponding objective functions
between two consecutive iterations are less than $10^{-3}$.
The SCA iteration of Algorithm 3 stops when the relative decrease of the cost function \eqref{Cache2 obj}
in the last $100$ iterations is less than $10^{-2}$.

\subsection{Convergence Behaviors of Proposed Algorithms}

\begin{figure}[t!]
\begin{center}
 \subfigure[]{
  \includegraphics[width=0.35\textwidth]{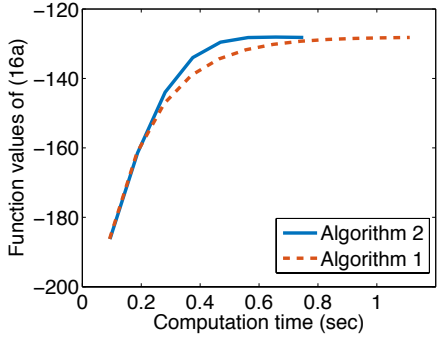}\label{inner}
  }
  \subfigure[]{
  \includegraphics[width=0.35\textwidth]{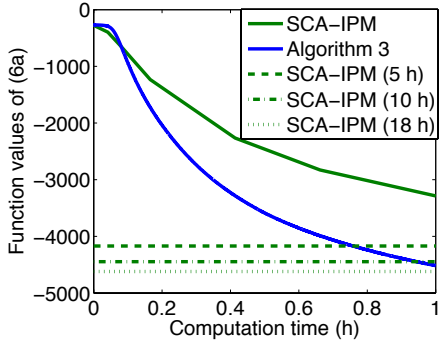}\label{outer}
}\caption{(a) Convergence behavior of the inner iteration of Algorithm 3.
(b) Convergence behavior of the outer iteration of Algorithm 3.}
\end{center}
\vspace{-0.2cm}
\end{figure}

First, we show the convergence behavior of Algorithm 3 for solving the
cache size allocation problem \eqref{Cache1}.
Since the inner iteration of Algorithm 3 is based on Algorithm 1 or Algorithm 2, we illustrate the convergence behaviors of Algorithm 1 and Algorithm 2 in Fig. \ref{inner}.
It can be seen that
they both converge to the same objective function value, but
Algorithm 2 achieves much faster convergence than Algorithm 1, with computation time roughly
half of that of Algorithm 1.
This demonstrates the effectiveness of the acceleration technique exploited in Algorithm 2.
Therefore, we only adopt Algorithm 2 as the inner iteration of Algorithm 3 for the rest of simulations.

On the other hand, the convergence behavior of the outer iteration of Algorithm 3 is shown in Fig. \ref{outer}. To show the complexity advantage of Algorithm 3, we compare it with a second-order algorithm (termed as SCA-IPM), which also applies the SCA framework but solves each SCA subproblem \eqref{Cache2-2} with the interior-point method.
It can be seen that with the proposed algorithm and SCA-IPM running for the same amount of time, the proposed algorithm achieves a much lower function value of \eqref{Cache2 obj}.
Furthermore, in order to achieve the same accuracy, SCM-IPM requires 5 to 18 times more computation time.
This demonstrates that Algorithm 3 is more suitable for the cache size allocation problem \eqref{Cache1}
with a large number of variables and constraints.
Notice that the computation time is measured based on MATLAB implementation. In
industrial implementation, more efficient programming languages (e.g., C++)
would be adopted to achieve even more efficient execution.

\subsection{Performance of Proposed Algorithms on Downloading Sum-Rate}
In this subsection, we demonstrate the effectiveness of the proposed cache size allocation
in terms of downloading sum-rate (with MCMB design using Algorithm 4).
The sum-rate achieved by the proposed cache allocation is compared to that of the uniform cache size allocation.
\textcolor{black}{
Furthermore, since the recent work \cite{Dai2018} is designed for the single-cluster multicast scenario
without inter-cluster interference,
we compare with a benchmark by
extending \cite{Dai2018} to the multi-cluster multicast scenario in a time-division multiplexing manner.
Specifically, for a $G$-cluster multicast scenario, to avoid the inter-cluster interference,
each cluster utilizes $1/G$ of the total time resources. Consequently,
without inter-cluster interference, the cache allocation problem
can be directly solved by the proposed algorithm in \cite{Dai2018}.
To show the importance of modeling the inter-cluster interference, we also add
a baseline, which directly applies \cite{Dai2018} without modeling the interference for optimization.}

Figure \ref{COM} illustrates the cumulative distribution functions of downloading sum-rates obtained by different schemes under $400$ channel realizations.
It can be seen that, by simultaneously delivering the required contents to different multicast clusters, the multi-cluster transmission schemes achieve significantly higher downloading sum-rate than the time-division single-cluster transmission schemes.
This demonstrates the necessity of the multi-cluster multicast transmission for improving the backhaul efficiency.
Moreover, with the optimized cache size, the downloading sum-rate achieved by the proposed algorithm is much higher than that of the uniform caching scheme.
\textcolor{black}{On the other hand,
without effective interference management,
the direct application of [8] results in the lowest downloading sum-rate.
This demonstrates the necessity of modeling the interference for the multi-cluster multicast scenario.}
\textcolor{black}{
The comparison among various schemes is summarized in Table I.
It can be seen that the average downloading sum-rate achieved by the proposed approach
is more than 2 times of that of the extension of \cite{Dai2018},
and more than 10 times of that of the direct application of \cite{Dai2018}, respectively.}

\begin{figure}[t!]
\begin{center}
  \includegraphics[width=0.35\textwidth]{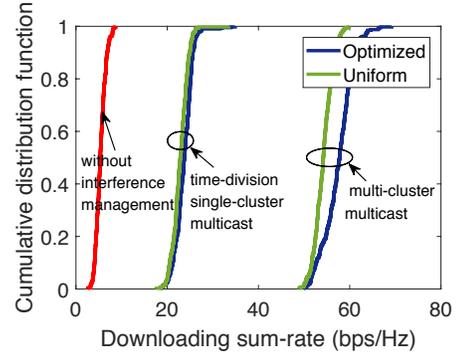}
  \caption{Cumulative distribution functions of downloading sum-rates obtained by different schemes under $400$ channel realizations.}\label{COM}
\end{center}
\vspace{-0.2cm}
\end{figure}

\begin{table*}[t]\caption{Average downloading sum-rate comparison}
\begin{center}\footnotesize
\begin{tabular}{ccccc}
\hline
 Schemes     & The way to handle multi-cluster interference  &  Average downloading sum-rate \\ \hline
 Proposed approach     & Multi-cluster beamforming  &  $59$ bps/Hz \\ \hline
 Direct application of [8] & No mitigation  &   $5$ bps/Hz \\  \hline
 Extending [8] to multi-cluster scenario & Time-division multiplexing &  $26$ bps/Hz \\ \hline
\end{tabular}
\end{center}
\vspace{-0.2cm}
\end{table*}

To reveal the insight of the superiority of the proposed cache size allocation, we show the
optimized cache sizes among different BSs in Fig. \ref{Cache-COM}.
It can be seen that, within a cluster, the cache size allocated by the proposed algorithm
increases with the distance between the BS and the computation center.
For instance, in Cluster 1, BS 3 is allocated with the largest cache sizes.
On the other hand, among different clusters,
the proposed algorithm allocates larger cache sizes to the nearer clusters.
For example, Cluster 1 and Cluster 4 are allocated with the the largest
and the smallest cache sizes, respectively.

\begin{figure}[t!]
\begin{center}
  \includegraphics[width=0.35\textwidth]{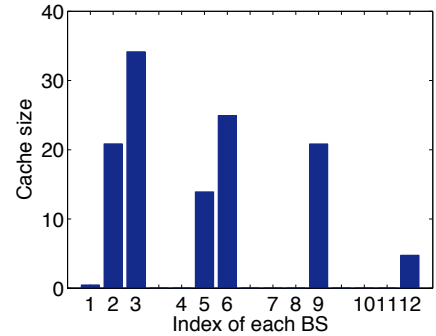}
  \caption{Cache size allocation results among different BSs. The distances between the computation center and the BSs
in the $4$ clusters are $\left\{160,260,360\right\}$,
$\left\{200, 280, 360\right\}$,
$\left\{160,280,400\right\}$, and
$\left\{240,320,400\right\}$ in meters.}\label{Cache-COM}
\end{center}
\vspace{-0.2cm}
\end{figure}

\textcolor{black}{
Finally, to investigate the impact of file popularity,
the allocated cache sizes among different files are shown in Fig.~\ref{BAR-3},
where there are $2$ clusters of BSs, and each cluster consists of $3$ BSs
situated in $160$, $260$, and $360$ meters from the computation center.
In Fig.~\ref{BAR55-3} to Fig.~\ref{BAR91-3}, each BS requests $2$ files with popularities $(p_1=0.5, p_2=0.5)$,
$(p_1=0.7, p_2=0.3)$, and $(p_1=0.9, p_2=0.1)$, respectively.
It can be seen that the weakest BS $3$ and BS $6$ are always allocated
the largest cache sizes in all the three cases.
Furthermore,
as the difference between the popularities of the two files increases,
the file with higher popularity is allocated much larger cache size.
For instance, as shown in Fig.~\ref{BAR91-3}, when $p_1=0.9$ and $p_2=0.1$,
the file with higher popularity
is allocated almost all the cache.}

\begin{figure}[t!]
	\begin{center}
		\subfigure[$p_1=0.5, p_2=0.5$]{\includegraphics[width=0.35\textwidth]{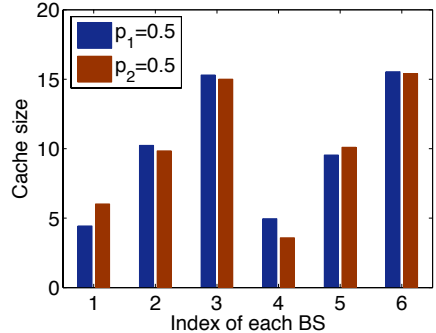} \label{BAR55-3}}
		\subfigure[$p_1=0.7, p_2=0.3$]{\includegraphics[width=0.35\textwidth]{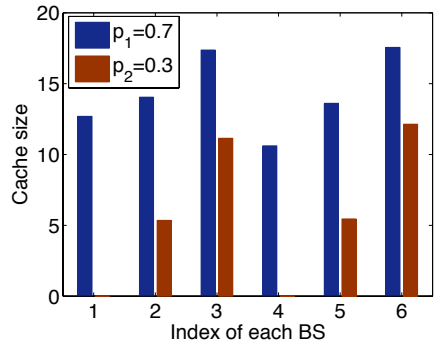} \label{BAR73-3}}
        \subfigure[$p_1=0.9, p_2=0.1$]{\includegraphics[width=0.35\textwidth]{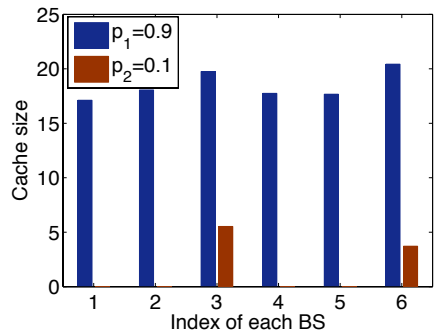} \label{BAR91-3}}
		\caption{Cache size allocation results among different files with different popularities.} \label{BAR-3}
	\end{center}
	\vspace{-0.2cm}
\end{figure}

\section{Conclusions}
Caching at BSs was studied for a C-RAN with multi-cluster multicast backhaul,
with the aim of maximizing the content downloading sum-rate of the wireless backhaul
under a total cache budget constraint.
To solve this large-scale nonsmooth nonconvex problem, we proposed
an accelerated first-order algorithm,
which achieves much higher content downloading sum-rate than a second-order algorithm running for the same amount of time. Moreover, with multi-cluster multicast transmission,
the proposed algorithm achieves significantly higher downloading sum-rate than those of
time-division single-cluster transmission schemes. In addition,
simulation results revealed that the proposed algorithm
allocates larger cache sizes to farther BSs within nearer clusters,
which provides insight to the superiority of the proposed cache allocation.

\numberwithin{equation}{section}
\appendices

\section{Proof of Proposition \ref{lm2}}\label{proof of lm2}
From \eqref{f2}, we can rewrite $f_{k,t}^{(i)}\left(C_{k}, \eta_{g_k,t}, \left\{\mathbf{V}_{g,t}\right\}_{g=1}^G\right)=\phi_{k,t}^{(i)}\left(\left\{\mathbf{V}_{g,t}\right\}_{g=1}^G\right)+\psi_{k,t}^{(i)}\left(C_{k}, \eta_{g_k,t}\right)$, where
\begin{eqnarray}\label{phi}
&&\phi_{k,t}^{(i)}\left(\left\{\mathbf{V}_{g,t}\right\}_{g=1}^G\right)
=\sum_{g=1}^G\mathrm{Tr}\left(\mathbf{V}_{g,t}^H
\mathbf{A}_{k,t}^{(i)}\mathbf{V}_{g,t}\right)
\nonumber\\
&&+2\Re\left\{\mathrm{Tr}\left(\mathbf{B}_{k,t}^{(i)}\mathbf{V}_{g_k,t}\right)\right\}
+b_{k,t}^{(i)}-\frac{\left(\eta_{g_k,t}^{(i)}+C_k^{(i)}\right)^2}{2},~~~~~~
\end{eqnarray}
\begin{eqnarray}\label{psi}
\psi_{k,t}^{(i)}\left(C_{k}, \eta_{g_k,t}\right)
=\frac{\eta_{g_k,t}^2+C_k^2}{2}
-\left(\eta_{g_k,t}^{(i)}+C_k^{(i)}\right)
\nonumber\\
\cdot\left(\eta_{g_k,t}+C_k\right)+F_{g_k}\eta_{g_k,t}+\frac{\left(\eta_{g_k,t}^{(i)}+C_k^{(i)}\right)^2}{2}.
\end{eqnarray}
From \eqref{psi}, we have
\begin{eqnarray}\label{psi2}
&&\psi_{k,t}^{(i)}\left(C_{k}, \eta_{g_k,t}\right)-(F_{g_k}-C_k)\eta_{g_k,t}
\nonumber\\
&=&\frac{1}{2}\left(\eta_{g_k,t}+C_k-\eta_{g_k,t}^{(i)}-C_k^{(i)}\right)^2\geq0,
\end{eqnarray}
where the equality holds at $C_k=C_k^{(i)}$ and $\eta_{g_k}=\eta_{g_k}^{(i)}$.

Next, we prove
\begin{eqnarray}\label{phi2}
&&\phi_{k,t}^{(i)}\left(\left\{\mathbf{V}_{g,t}\right\}_{g=1}^G\right)
\nonumber\\
&\geq&-\log\det\left(\mathbf{I}_N
+\mathbf{H}_{k,t}\mathbf{V}_{g_k,t}\mathbf{V}_{g_k,t}^H\mathbf{H}_{k,t}^H
\mathbf{J}_{k,t}\right).~~~~
\end{eqnarray}
More specifically, applying $\det\left(\mathbf{I}+\mathbf{X}\mathbf{Y}\right)=\det\left(\mathbf{I}+\mathbf{Y}\mathbf{X}\right)$,
we have
\begin{eqnarray}\label{det}
&&\log\det\left(\mathbf{I}_N
+\mathbf{H}_{k,t}\mathbf{V}_{g_k,t}\mathbf{V}_{g_k,t}^H\mathbf{H}_{k,t}^H
\mathbf{J}_{k,t}\right)
\nonumber\\
&=&
\log\det\left(\mathbf{I}_d
+\mathbf{V}_{g_k,t}^H\mathbf{H}_{k,t}^H
\mathbf{J}_{k,t}\mathbf{H}_{k,t}\mathbf{V}_{g_k,t}\right).
\end{eqnarray}
Applying the Woodbury matrix identity to the right-hand-side of \eqref{det}, we have
\begin{eqnarray}\label{det2}
&&\log\det\left(\mathbf{I}_N
+\mathbf{H}_{k,t}\mathbf{V}_{g_k,t}\mathbf{V}_{g_k,t}^H\mathbf{H}_{k,t}^H
\mathbf{J}_{k,t}\right)
\nonumber\\
&=&\log\det\left(
{\left(\underbrace{\mathbf{I}_d-
\mathbf{U}_{k,t}^H\mathbf{H}_{k,t}\mathbf{V}_{g_k,t}}_
{\mathbf{Q}_{k,t}}\right)}^{-1}\right),
\end{eqnarray}
where $\mathbf{U}_{k,t}$ is defined as
\begin{equation}\label{Ut}
\mathbf{U}_{k,t}\triangleq
\left(\sum_{g=1}^G\mathbf{H}_{k,t}\mathbf{V}_{g,t}
\mathbf{V}_{g,t}^H\mathbf{H}_{k,t}^H
+\sigma_k^2\mathbf{I}_{N}\right)^{-1}\mathbf{H}_{k,t}\mathbf{V}_{g_k,t}.
\end{equation}
Notice that the right-hand-side of \eqref{det2} is convex over ${\mathbf{Q}_{k,t}}$, thus
by using first-order Taylor expansion at $\mathbf{Q}_{k,t}^{(i)}\triangleq
\mathbf{I}_{d}-\left(\mathbf{U}_{k,t}^{(i)}\right)^H\mathbf{H}_{k,t}\mathbf{V}_{g_k,t}^{(i)}$, we have
\begin{eqnarray}\label{first-order expansion}
&&\log\det\left(\mathbf{I}_N
+\mathbf{H}_{k,t}\mathbf{V}_{g_k,t}\mathbf{V}_{g_k,t}^H\mathbf{H}_{k,t}^H
\mathbf{J}_{k,t}\right)
\nonumber\\
&\geq&
\log\det\left(\left(\mathbf{Q}_{k,t}^{(i)}\right)^{-1}\right)
-\mathrm{Tr}\left(\left(\mathbf{Q}_{k,t}^{(i)}\right)^{-1}\mathbf{Q}_{k,t}\right)
+d.\nonumber\\
\end{eqnarray}
Furthermore, we majorize $\mathbf{Q}_{k,t}$ by
\begin{equation}\label{E}
\mathbf{E}_{k,t}=
\mathbf{Q}_{k,t}
+\left(\mathbf{U}_{k,t}^{(i)}-\mathbf{U}_{k,t}\right)^H
\mathbf{X}_{k,t}
\left(\mathbf{U}_{k,t}^{(i)}-\mathbf{U}_{k,t}\right),
\end{equation}
where $\mathbf{X}_{k,t}\triangleq\sum_{g=1}^G\mathbf{H}_{k,t}\mathbf{V}_{g,t}
\mathbf{V}_{g,t}^H\mathbf{H}_{k,t}^H
+\sigma_k^2\mathbf{I}_{N}$.
Substituting \eqref{E} into \eqref{first-order expansion} yields
\begin{eqnarray}\label{ine}
&&-\log\det\left(\mathbf{I}_N
+\mathbf{H}_{k,t}\mathbf{V}_{g_k,t}\mathbf{V}_{g_k,t}^H\mathbf{H}_{k,t}^H
\mathbf{J}_{k,t}\right)
\nonumber\\
&\leq&\mathrm{Tr}\left(\left(\mathbf{Q}_{k,t}^{(i)}\right)^{-1}\mathbf{E}_{k,t}
\right)-\log\det\left(\left(\mathbf{Q}_{k,t}^{(i)}\right)^{-1}\right)
-d.\nonumber\\
\end{eqnarray}
Then we show the right-hand-side of \eqref{ine} is equal to $\phi_{k,t}^{(i)}\left(\left\{\mathbf{V}_{g,t}\right\}_{g=1}^G\right)$.
Substituting the expressions of $\mathbf{Q}_{k,t}$ in \eqref{det2}
and $\mathbf{U}_{k,t}$ in \eqref{Ut} into $\mathbf{E}_{k,t}$, we have
\begin{equation}\label{E2}
\mathbf{E}_{k,t}=
\left(\mathbf{U}_{k,t}^{(i)}\right)^H\mathbf{X}_{k,t}\mathbf{U}_{k,t}^{(i)}
-2\Re\left\{
\left(\mathbf{U}_{k,t}^{(i)}\right)^H\mathbf{H}_{k,t}\mathbf{V}_{g_k,t}\right\}+
\mathbf{I}_d.
\end{equation}
By applying the expressions of $\mathbf{E}_{k,t}$,
$\mathbf{A}_{k,t}^{(i)}$, $\mathbf{B}_{k,t}^{(i)}$, and $b_{k,t}^{(i)}$ in
\eqref{E2}, \eqref{Dt}, \eqref{Ct}, and \eqref{bt},
the right-hand-side of \eqref{ine} is equal to $\phi_{k,t}^{(i)}\left(\left\{\mathbf{V}_{g,t}\right\}_{g=1}^G\right)$,
thus \eqref{phi2} is proved.

Now, we show the equality in \eqref{phi2} holds at $\left\{\mathbf{V}_{g,t}\right\}_{g=1}^G=\left\{\mathbf{V}_{g,t}^{(i)}\right\}_{g=1}^G$.
When $\left\{\mathbf{V}_{g,t}\right\}_{g=1}^G=\left\{\mathbf{V}_{g,t}^{(i)}\right\}_{g=1}^G$,
we have $\mathbf{U}_{k,t}=\mathbf{U}_{k,t}^{(i)}$ and
$\mathbf{Q}_{k,t}=\mathbf{Q}_{k,t}^{(i)}$,
thus the equality in \eqref{first-order expansion} holds.
Moreover, from \eqref{E}, seeing that
$\mathbf{Q}_{k,t}=\mathbf{E}_{k,t}$ at $\mathbf{U}_{k,t}=\mathbf{U}_{k,t}^{(i)}$,
the equality in \eqref{ine} also holds.
Since we have shown that the right-hand-side of \eqref{ine} is equal to $\phi_{k,t}^{(i)}\left(\left\{\mathbf{V}_{g,t}\right\}_{g=1}^G\right)$,
the equality in \eqref{phi2} also holds at $\left\{\mathbf{V}_{g,t}\right\}_{g=1}^G=\left\{\mathbf{V}_{g,t}^{(i)}\right\}_{g=1}^G$.
Therefore, adding up the left-hand-sides of \eqref{psi2} and \eqref{phi2},
we complete the proof for \textbf{(1.1)} of Proposition \ref{lm2}.

Finally, we prove \textbf{(1.2)} of Proposition \ref{lm2}.
From \eqref{psi} we have
\begin{equation}\label{psi3}
\frac{\partial\psi_{k,t}^{(i)}\left(C_{k}^{(i)}, \eta_{g_k,t}^{(i)}\right)}{\partial C_k}=-\eta_{g_k,t}^{(i)},
\end{equation}
\begin{equation}\label{psi4}
\frac{\partial\psi_{k,t}^{(i)}\left(C_{k}^{(i)}, \eta_{g_k,t}^{(i)}\right)}{\partial \eta_{g_k,t}}=
F_{g_k}-C_k^{(i)}.
\end{equation}
On the other hand, since \eqref{first-order expansion} is the first-order expansion of
$\log\det\left(\mathbf{I}_N
+\mathbf{H}_{k,t}\mathbf{V}_{g_k,t}\mathbf{V}_{g_k,t}^H\mathbf{H}_{k,t}^H
\mathbf{J}_{k,t}\right)$ at $\mathbf{Q}_{k,t}=\mathbf{Q}_{k,t}^{(i)}$,
we have
\begin{eqnarray}\label{dR}
&&\frac{\partial}{\partial v}\left(\log\det\left(\mathbf{I}_N
+\mathbf{H}_{k,t}\mathbf{V}_{g_{k,t}}^{(i)}\left(\mathbf{V}_{g_{k,t}}^{(i)}\right)^H\mathbf{H}_{k,t}^H
\mathbf{J}_{k,t}^{(i)}\right)\right)
\nonumber\\
&=&
-\mathrm{Tr}\left(\left(\mathbf{Q}_{k,t}^{(i)}\right)^{-1}
\frac{\partial\mathbf{Q}_{k,t}^{(i)}}{\partial v}\right),
\end{eqnarray}
where $v$ represents any element of $\mathbf{V}_{g,t}$, $\forall g=1,2,\ldots,G$.
From \eqref{E}, we have
\begin{eqnarray}\label{dE}
\frac{\partial\mathbf{E}_{k,t}^{(i)}}{\partial v}
&=&\frac{\partial\mathbf{Q}_{k,t}^{(i)}}{\partial v}
+\frac{\partial}{\partial v}
\Bigg(\left(\mathbf{U}_{k,t}^{(i)}-\mathbf{U}_{k,t}\right)^H
\nonumber\\
&&\cdot\mathbf{X}_{k,t} \left.\Bigg)\right\vert_{\big\{\mathbf{V}_{g,t}\big\}_{g=1}^G=\big\{\mathbf{V}_{g,t}^{(i)}\big\}_{g=1}^G}
\left(\mathbf{U}_{k,t}^{(i)}-\mathbf{U}_{k,t}^{(i)}\right)
\nonumber\\
&&+\left(\mathbf{U}_{k,t}^{(i)}-\mathbf{U}_{k,t}^{(i)}\right)^H
\mathbf{X}_{k,t}
\nonumber\\
&&\left.\frac{\partial}{\partial v}\left(\mathbf{U}_{k,t}^{(i)}-\mathbf{U}_{k,t}\right)\right\vert_{\big\{\mathbf{V}_{g,t}\big\}_{g=1}^G=\big\{\mathbf{V}_{g,t}^{(i)}\big\}_{g=1}^G}
\nonumber\\
&=&\frac{\partial\mathbf{Q}_{k,t}^{(i)}}{\partial v}.
\end{eqnarray}
Substituting \eqref{dE} into \eqref{dR} yields
\begin{eqnarray}\label{dR2}
&&\frac{\partial}{\partial v}\left(\log\det\left(\mathbf{I}_N
+\mathbf{H}_{k,t}\mathbf{V}_{g_{k,t}}^{(i)}\left(\mathbf{V}_{g_{k,t}}^{(i)}\right)^H\mathbf{H}_{k,t}^H
\mathbf{J}_{k,t}^{(i)}\right)\right)
\nonumber\\
&=&
-\mathrm{Tr}\left(\left(\mathbf{Q}_{k,t}^{(i)}\right)^{-1}
\frac{\partial\mathbf{E}_{k,t}^{(i)}}{\partial v}\right)
=-\frac{\partial }{\partial v}\phi_{k,t}^{(i)}\left(\left\{\mathbf{V}_{g,t}\right\}_{g=1}^G\right),\nonumber\\
\end{eqnarray}
where the second equality follows from the fact that the right-hand-side of \eqref{ine} is equal to $\phi_{k,t}^{(i)}\left(\left\{\mathbf{V}_{g,t}\right\}_{g=1}^G\right)$.
Combining \eqref{psi3}, \eqref{psi4}, and \eqref{dR2}, we complete the proof for
\textbf{(1.2)} of Proposition \ref{lm2}.

\section{Proof of Proposition \ref{smooth}}\label{proof of smooth}
The partial Lagrangian function of \eqref{Cache3} is
\begin{eqnarray}\label{LR}
&&L\left(\left\{C_{k}\right\}, \left\{\mathbf{V}_{g,t}, \eta_{g,t}\right\}, \left\{\delta_{t}\right\}, \left\{\lambda_{k,t}\right\}, \mu\right)\nonumber\\
&=&\Upsilon^{(i)}\left(\left\{C_{k}\right\}, \left\{\mathbf{V}_{g,t}, \eta_{g,t}\right\}\right)
\nonumber\\
&&+\sum_{t=1}^T\delta_t\left(\sum_{g=1}^G\mathrm{Tr}\left(\mathbf{V}_{g,t}
\mathbf{V}_{g,t}^H\right)-P_{\text{tot}}\right)
\nonumber\\&&+\sum_{t=1}^T\sum_{k=1}^K
\lambda_{k,t}f_{k,t}^{(i)}\left(C_{k}, \eta_{g_k,t}, \left\{\mathbf{V}_{g,t}\right\}_{g=1}^G\right)
\nonumber\\
&&+\mu\left(\sum_{k=1}^KC_k-C_{\text{tot}}\right),
\end{eqnarray}
where $\left\{\delta_{t}\right\}$, $\mu$, and $\left\{\lambda_{k,t}\right\}$
are the non-negative dual variables corresponding to the coupling constraints \eqref{Cache3 power}, \eqref{Cache3 cache1}, and \eqref{Cache3 rate}. Consequently,
the dual function of \eqref{Cache3} is defined as
\begin{eqnarray}\label{Cache4}
&&D\left(\left\{\delta_{t}\right\}, \left\{\lambda_{k,t}\right\}, \mu\right)
\triangleq
\nonumber\\
&&\min_{\left\{C_{k}\right\}, \left\{\mathbf{V}_{g,t}, \eta_{g,t}\right\}}L\left(\left\{C_{k}\right\}, \left\{\mathbf{V}_{g,t}, \eta_{g,t}\right\}, \left\{\delta_{t}\right\}, \left\{\lambda_{k,t}\right\}, \mu\right),
\nonumber\\
&&\text{s.t.}\ \eqref{Cache3 cache2}.
\end{eqnarray}
By substituting the expressions of
$f_{k,t}^{(i)}\left(C_{k}, \eta_{g_k,t}, \left\{\mathbf{V}_{g,t}\right\}_{g=1}^G\right)$
in \eqref{f2} and $\Upsilon^{(i)}\left(\left\{C_{k}\right\}, \left\{\mathbf{V}_{g,t}, \eta_{g,t}\right\}\right)$
in \eqref{strong convex} into \eqref{Cache4}, problem \eqref{Cache4} becomes
\begin{subequations}\label{Cache5}
\begin{eqnarray}\label{Cache5 obj}
&&\min_{\left\{C_{k}\right\},\left\{\mathbf{V}_{g,t}, \eta_{g,t}\right\}}-\sum_{t=1}^T\sum_{g=1}^GF_g\eta_{g,t}+\frac{\rho_3}{2}\sum_{k=1}^K\left(C_k-C_k^{(i)}\right)^2
\nonumber\\&&
+\frac{\rho_1}{2}\sum_{t=1}^T\sum_{g=1}^G\left(\eta_{g,t}-\eta_{g,t}^{(i)}\right)^2
\nonumber\\&&
+\rho_2\sum_{t=1}^T\sum_{g=1}^G\left\Vert\mathbf{V}_{g,t}-\mathbf{V}_{g,t}^{(i)}\right\Vert_F^2
+\mu\left(\sum_{k=1}^KC_k-C_{\text{tot}}\right)\nonumber\\
&&
+\sum_{t=1}^T\delta_t\left(\sum_{g=1}^G\mathrm{Tr}\left(\mathbf{V}_{g,t}
\mathbf{V}_{g,t}^H\right)-P_{\text{tot}}\right)
\nonumber\\&&+\sum_{t=1}^T\sum_{k=1}^K
\lambda_{k,t}\Bigg(
\sum_{g=1}^G\mathrm{Tr}\left(\mathbf{V}_{g,t}^H
\mathbf{A}_{k,t}^{(i)}\mathbf{V}_{g,t}\right)
\nonumber\\
&&
+2\Re\left\{\mathrm{Tr}\left(\mathbf{B}_{k,t}^{(i)}\mathbf{V}_{g_k,t}\right)\right\}
+\frac{\eta_{g_k,t}^2+C_k^2}{2}+F_{g_k}\eta_{g_k,t}\nonumber\\
&&-\left(\eta_{g_k,t}^{(i)}+C_k^{(i)}\right)\left(\eta_{g_k,t}+C_k\right)
+b_{k,t}^{(i)}\Bigg),
\end{eqnarray}
\begin{equation}\label{Cache5 cache2}
\text{s.t.}\
0\leq C_k \leq F_{g_k},~~\forall k=1, 2, \ldots, K,
\end{equation}
\end{subequations}
Due to the variable separability of \eqref{Cache5 obj}, problem \eqref{Cache5} can be decomposed into
$2GT+K$ subproblems in parallel. Specifically,
there are $GT$ subproblems over $\left\{\eta_{g,t}\right\}$, with each written as
\begin{eqnarray}\label{Cache5-1}
\min_{\eta_{g,t}}&&-F_g\eta_{g,t}+\frac{\rho_1}{2}\left(\eta_{g,t}-\eta_{g,t}^{(i)}\right)^2
\nonumber\\
&&+\sum_{k\in\mathcal{K}_g}\lambda_{k,t}\left(\frac{\eta_{g,t}^2}{2}
+\left(F_g-\eta_{g,t}^{(i)}-C_k^{(i)}\right)\eta_{g,t}\right).~~~~~~
\end{eqnarray}
Since the cost function \eqref{Cache5-1} is strongly convex over $\eta_{g,t}$, by setting its gradient to zero, the minimizer
$\eta_{g,t}^{\diamond}$ is uniquely given by \eqref{eta}.
Moreover, there are $GT$ subproblems over $\left\{\mathbf{V}_{g,t}\right\}$, with each written as
\begin{eqnarray}\label{Cache5-2}
\min_{\mathbf{V}_{g,t}}
&&\rho_2\left\Vert\mathbf{V}_{g,t}-\mathbf{V}_{g,t}^{(i)}\right\Vert_F^2
+\sum_{k=1}^K\lambda_{k,t}\mathrm{Tr}\left(\mathbf{V}_{g,t}^H
\mathbf{A}_{k,t}^{(i)}\mathbf{V}_{g,t}\right)
\nonumber\\
&&
+2\sum_{k\in\mathcal{K}_g}\lambda_{k,t}\Re\left\{\mathrm{Tr}\left(\mathbf{B}_{k,t}^{(i)}\mathbf{V}_{g,t}\right)\right\}
+\delta_t\mathrm{Tr}\left(\mathbf{V}_{g,t}
\mathbf{V}_{g,t}^H\right).\nonumber\\
\end{eqnarray}
Since the cost function \eqref{Cache5-2} is strongly convex over $\mathbf{V}_{g,t}$, by setting its gradient to zero, the minimizer $\mathbf{V}_{g,t}^{\diamond}$ is uniquely given by \eqref{V}.
In addition, there are $K$ subproblems over $\left\{C_k\right\}$, with each written as
\begin{eqnarray}\label{Cache5-3}
\min_{0\leq C_k \leq F_{g_k}}&&\mu C_k+\frac{\rho_3}{2}\left(C_k-C_k^{(i)}\right)^2
\nonumber\\&&
+\sum_{t=1}^T\lambda_{k,t}\left(\frac{C_k^2}{2}-\left(\eta_{g_k,t}^{(i)}+C_k^{(i)}\right)C_k\right).~~~~~
\end{eqnarray}
Since the cost function \eqref{Cache5-3} is strongly convex quadratic over the scalar variable $C_k$,
by setting its gradient to zero and then projecting the solution to $0\leq C_k \leq F_{g_k}$,
the minimizer $C_k^{\diamond}$ is uniquely given by \eqref{Ck}.
By substituting the optimal solution $\left\{\mathbf{V}_{g,t}^{\diamond}, \eta_{g,t}^{\diamond}\right\}$
and $\left\{C_{k}^{\diamond}\right\}$ into \eqref{Cache4}, the dual function $D\left(\left\{\delta_{t}\right\}, \left\{\lambda_{k,t}\right\}, \mu\right)$ is expressed in a closed-form as shown in \eqref{dual}.

\section{SCA Subproblem of \eqref{MCMB0}}\label{derive MCMB}
To tackle the non-convexity of the constraint
\eqref{MCMB1 rate}, we apply the SCA framework by
quadratically convexifying $-\log\det\left(\mathbf{I}_N
+\mathbf{H}_k\mathbf{V}_{g_k}\mathbf{V}_{g_k}^H\mathbf{H}_k^H
\mathbf{J}_k\right)$. Specifically,
given any fixed $\left\{\mathbf{V}_g^{(i)}\right\}$, we define a convex quadratic function:
\begin{eqnarray}\label{f}
h_k^{(i)}\left(\left\{\mathbf{V}_g\right\}\right)&\triangleq&
\sum_{g=1}^G\mathrm{Tr}\left(\mathbf{V}_{g}^H
\mathbf{\hat{A}}_{k}^{(i)}\mathbf{V}_{g}\right)
\nonumber\\
&&+2\Re\left\{\mathrm{Tr}\left(\mathbf{\hat{B}}_{k}^{(i)}\mathbf{V}_{g_k}\right)\right\}
+\hat{b}_{k}^{(i)},
\end{eqnarray}
where $\mathbf{\hat{A}}_{k}^{(i)}$, $\mathbf{\hat{B}}_{k}^{(i)}$, and $\hat{b}_{k}^{(i)}$ are
given by
\begin{eqnarray}
\mathbf{\hat{A}}_{k}^{(i)}&\triangleq&
\mathbf{H}_{k}^H\mathbf{\hat{U}}_{k}^{(i)}\left(\mathbf{I}_{d}-\left(\mathbf{\hat{U}}_{k}^{(i)}\right)^H
\mathbf{H}_k\mathbf{V}_{g_k}^{(i)}\right)^{-1}
\nonumber\\
&&\cdot\left(\mathbf{\hat{U}}_{k}^{(i)}\right)^H\mathbf{H}_k,
\label{D}\\
\mathbf{\hat{B}}_{k}^{(i)}&\triangleq&
-\left(\mathbf{I}_{d}-\left(\mathbf{\hat{U}}_{k}^{(i)}\right)^H
\mathbf{H}_{k}\mathbf{V}_{g_k}^{(i)}\right)^{-1}
\nonumber\\
&&\cdot
\left(\mathbf{\hat{U}}_{k}^{(i)}\right)^H\mathbf{H}_{k},
\label{C}\\
\hat{b}_{k}^{(i)}&\triangleq&
\mathrm{Tr}\Bigg(\left(\mathbf{I}_{d}-\left(\mathbf{\hat{U}}_{k}^{(i)}\right)^H
\mathbf{H}_k\mathbf{V}_{g_k}^{(i)}\right)^{-1}
\nonumber\\
&&\left(\mathbf{I}_{d}
+\sigma_k^2\left(\mathbf{\hat{U}}_{k}^{(i)}\right)^H\mathbf{\hat{U}}_{k}^{(i)}\right)\Bigg)
\nonumber\\
&&+\log\det\left(\mathbf{I}_{d}-\left(\mathbf{\hat{U}}_{k}^{(i)}\right)^H
\mathbf{H}_k\mathbf{V}_{g_k}^{(i)}\right)-d,~~~~~
\label{b}
\end{eqnarray}
with
\begin{equation}
\mathbf{\hat{U}}_{k}^{(i)}\triangleq
\left(\sum_{g=1}^G\mathbf{H}_k\mathbf{V}_{g}^{(i)}
\left(\mathbf{V}_{g}^{(i)}\right)^H\mathbf{H}_k^H
+\sigma_k^2\mathbf{I}_{N}\right)^{-1}\mathbf{H}_k\mathbf{V}_{g_k}^{(i)}.\label{Ui}
\end{equation}
Notice that $h_k^{(i)}\left(\left\{\mathbf{V}_g\right\}\right)$ is in a similar form to
$\phi_{k,t}^{(i)}\left(\left\{\mathbf{V}_{g,t}\right\}_{g=1}^G\right)$ in \eqref{phi},
thus by using similar arguments as in Appendix \ref{proof of lm2},
we can establish
two properties of $h_k^{(i)}\left(\left\{\mathbf{V}_g\right\}\right)$:
\begin{enumerate}
\item[\textbf{C.a}]
 $h_k^{(i)}\left(\left\{\mathbf{V}_g\right\}\right)\geq-\log\det\left(\mathbf{I}_N
+\mathbf{H}_k\mathbf{V}_{g_k}\mathbf{V}_{g_k}^H\mathbf{H}_k^H
\mathbf{J}_k\right)$,
where the equality holds at $\mathbf{V}_g=\mathbf{V}_g^{(i)}$, $\forall g=1,2,\ldots,G$.
\item[\textbf{C.b}]
 $\frac{\partial }{\partial v}h_k^{(i)}\left(\left\{\mathbf{V}_g^{(i)}\right\}\right)=-\frac{\partial}{\partial v}\Big(\log\det\Big(\mathbf{I}_N$
$+\mathbf{H}_k\mathbf{V}_{g_k}^{(i)}\left(\mathbf{V}_{g_k}^{(i)}\right)^H\mathbf{H}_k^H
\mathbf{J}_k^{(i)}\Big)\Big)$, where
$v$ represents any element of
$\mathbf{V}_g$, $\forall g=1,2,\ldots,G$, and
$\mathbf{J}_k^{(i)}=\left.\mathbf{J}_k\right\vert_{\big\{\mathbf{V}_g\big\}=\big\{\mathbf{V}_g^{(i)}\big\}}$.
\end{enumerate}

Consequently, the nonconvex constraint \eqref{MCMB1 rate} can be tightly approximated by
\begin{equation}\label{convex c1}
(F_{g_k}-C_k)\eta_{g_k}+h_k^{(i)}\left(\left\{\mathbf{V}_g\right\}\right)\leq 0,~~\forall k=1, 2, \ldots, K.
\end{equation}
Since $h_k^{(i)}\left(\left\{\mathbf{V}_g\right\}\right)$ in \eqref{f} is convex quadratic over
$\left\{\mathbf{V}_g\right\}$, and $(F_{g_k}-C_k)\eta_{g_k}$ is linear over $\eta_{g_k}$,
the constructed constraint in \eqref{convex c1} is jointly convex over $\left\{\mathbf{V}_g\right\}$ and $\eta_{g_k}$.
With the sequence of convex constraints constructed in \eqref{convex c1},
problem \eqref{MCMB0} can be iteratively solved in the SCA framework,
with the $i$-th SCA subproblem shown in \eqref{MCMB2}.

\section{Proposed Algorithm for Solving \eqref{MULTI}}\label{al5}

\begin{breakablealgorithm}
\caption{Proposed Algorithm for Solving \eqref{MULTI}}
\begin{algorithmic}[1]\footnotesize
\State Initialize $\left\{\mathbf{V}^{(0)}_{g,t,\mathbf{f}}, \eta^{(0)}_{g,t,\mathbf{f}}\right\}$
and $\left\{C^{(0)}_{k,f_{g_k}}\right\}$:
\begin{eqnarray}
&&C_{k,f_{g_k}}^{(0)}=C_{\text{tot}}/\left(K\left\vert\mathcal{F}_{g_k}\right\vert\right),
~~~~\mathbf{V}_{g,t,\mathbf{f}}^{(0)}=\sqrt{\frac{P_{\text{tot}}}{GMd}}\mathbf{1}_{M\times d},
\nonumber\\
&&\eta_{g,t,\mathbf{f}}^{(0)}=\min_{k\in\mathcal{K}_g}
\Bigg\{\frac{1}{F_{f_g}-C_{k,f_g}^{(0)}}
\nonumber\\
&&\cdot\log\det\left(\mathbf{I}_N
+\mathbf{H}_{k,t,\mathbf{f}}\mathbf{V}_{g,t,\mathbf{f}}^{(0)}
\left(\mathbf{V}_{g,t,\mathbf{f}}^{(0)}\right)^H\mathbf{H}_{k,t,\mathbf{f}}^H
\mathbf{J}_{k,t,\mathbf{f}}^{(0)}\right)\Bigg\}.
~~~~~~~~~~~~~~~~~~~~~~~~~~~~~~~~~~~~~~~~~~~~~~~~~~~~~~~~~~~~~~~~~~~
~~~~~~~~~~~~~~~~~~~~~~~~~~~~~\nonumber
\end{eqnarray}\\
$\textbf{repeat}$ ($i=0,1,\ldots$)\\
Compute $\left\{
\mathbf{U}_{k,t,\mathbf{f}}^{(i)},
\mathbf{A}_{k,t,\mathbf{f}}^{(i)},
\mathbf{B}_{k,t,\mathbf{f}}^{(i)},
b_{k,t,\mathbf{f}}^{(i)}
\right\}$:
\begin{eqnarray}
&\mathbf{U}_{k,t,\mathbf{f}}^{(i)}=&
\left(\sum_{g=1}^G\mathbf{H}_{k,t,\mathbf{f}}\mathbf{V}_{g,t,\mathbf{f}}^{(i)}
\left(\mathbf{V}_{g,t,\mathbf{f}}^{(i)}\right)^H\mathbf{H}_{k,t,\mathbf{f}}^H
+\sigma_k^2\mathbf{I}_{N}\right)^{-1}
\nonumber\\
&&
\mathbf{H}_{k,t,\mathbf{f}}\mathbf{V}_{g_k,t,\mathbf{f}}^{(i)},
~~~~~~~~~~~~~~~~~~~~~~~~~~~~~~~~~~~~~~~~~~~~~~~~~~~\nonumber\\
&\mathbf{A}_{k,t,\mathbf{f}}^{(i)}=&
\mathbf{H}_{k,t,\mathbf{f}}^H\mathbf{U}_{k,t,\mathbf{f}}^{(i)}\left(\mathbf{I}_{d}-\left(\mathbf{U}_{k,t,\mathbf{f}}^{(i)}\right)^H
\mathbf{H}_{k,t,\mathbf{f}}\mathbf{V}_{g_k,t,\mathbf{f}}^{(i)}\right)^{-1}
\nonumber\\
&&
\cdot\left(\mathbf{U}_{k,t,\mathbf{f}}^{(i)}\right)^H\mathbf{H}_{k,t,\mathbf{f}},
~~~~~~~~~~~~~~~~~~~~~~~~~~~~~~~~~~~~~~~~~~~~~~~~~~~~~\nonumber\\
&\mathbf{B}_{k,t,\mathbf{f}}^{(i)}=&
-\left(\mathbf{I}_{d}-\left(\mathbf{U}_{k,t,\mathbf{f}}^{(i)}\right)^H
\mathbf{H}_{k,t,\mathbf{f}}\mathbf{V}_{g_k,t,\mathbf{f}}^{(i)}\right)^{-1}
\nonumber\\
&&
\cdot\left(\mathbf{U}_{k,t,\mathbf{f}}^{(i)}\right)^H\mathbf{H}_{k,t,\mathbf{f}},\nonumber\\
&b_{k,t,\mathbf{f}}^{(i)}=&\mathrm{Tr}\Bigg(\left(\mathbf{I}_{d}
-\left(\mathbf{U}_{k,t,\mathbf{f}}^{(i)}\right)^H
\mathbf{H}_{k,t,\mathbf{f}}\mathbf{V}_{g_k,t,\mathbf{f}}^{(i)}\right)^{-1}
\nonumber\\
&&\cdot
\left(\mathbf{I}_{d}
+\sigma_k^2\left(\mathbf{U}_{k,t,\mathbf{f}}^{(i)}\right)^H\mathbf{U}_{k,t,\mathbf{f}}^{(i)}\right)\Bigg)
~~~~~~~~~~~~~~~~~~~~~~~~~~~~~~~~~~~~~~~~~~~~~~~~~~~~~~~\nonumber\\
&&+\log\det\left(\mathbf{I}_d-\left(\mathbf{U}_{k,t,\mathbf{f}}^{(i)}\right)^H
\mathbf{H}_{k,t,\mathbf{f}}\mathbf{V}_{g_k,t,\mathbf{f}}^{(i)}\right)
\nonumber\\
&&
+\frac{\left(\eta_{g_k,t,\mathbf{f}}^{(i)}+C_{k,f_{g_k}}^{(i)}\right)^2}{2}-d.\nonumber
\end{eqnarray}\\
Initialize $\theta^{(0)}=1, \mu=1,
\delta_{t,\mathbf{f}}=1,
\lambda_{k,t,\mathbf{f}}=1, \forall \mathbf{f}\in\mathcal{F}, \forall k=1,2,\ldots,K, \forall t=1,2,\ldots,T$.\\
$\textbf{repeat}$ ($s=1,2,\ldots$)\\
$\theta^{(s)}=\frac{1+\sqrt{1+4\left(\theta^{(s-1)}\right)^2}}{2}$.\\
Update $\left\{\mathbf{V}^{\diamond}_{g,t,\mathbf{f}}, \eta^{\diamond}_{g,t,\mathbf{f}}\right\}$
and $\left\{C^{\diamond}_{k,f_{g_k}}\right\}$:
\begin{eqnarray}
\mathbf{V}_{g,t,\mathbf{f}}^{\diamond}&=&\left(\left(\rho_2+\delta_{t,\mathbf{f}}\right)\mathbf{I}_M
+\sum_{k=1}^K\lambda_{k,t,\mathbf{f}}\mathbf{A}_{k,t,\mathbf{f}}^{(i)}\right)^{-1}
\nonumber\\
&&
\cdot\left(\rho_2\mathbf{V}_{g,t,\mathbf{f}}^{(i)}-\sum_{k\in\mathcal{K}_g}\lambda_{k,t,\mathbf{f}}
\left(\mathbf{B}_{k,t,\mathbf{f}}^{(i)}
\right)^H\right),
\nonumber\\
\eta_{g,t,\mathbf{f}}^{\diamond}&=&
\frac{\left(p_{f_g}-\sum_{k\in\mathcal{K}_g}\lambda_{k,t,\mathbf{f}}\right)F_{f_g}+
\sum_{k\in\mathcal{K}_g}\lambda_{k,t,\mathbf{f}}C_{k,f_{g_k}}^{(i)}}
{\rho_1+\sum_{k\in\mathcal{K}_g}\lambda_{k,t,\mathbf{f}}}
\nonumber\\
&&+\eta_{g,t,\mathbf{f}}^{(i)},
\nonumber\\
C_{k,f_{g_k}}^{\diamond}&=&\min\Bigg\{\max\Bigg\{\Bigg(C_{k,f_{g_k}}^{(i)}
\nonumber\\
&&
+\frac{\sum_{t=1}^T\sum_{f_{g_k}}\lambda_{k,t,\mathbf{f}}\eta_{g_k,t,\mathbf{f}}^{(i)}-\mu}
{\rho_3+\sum_{t=1}^T\sum_{f_{g_k}}\lambda_{k,t,\mathbf{f}}}\Bigg), 0\Bigg\}, F_{f_{g_k}}\Bigg\}.
~~~~~~~~~~~~~~~~~~~~~~~~~~~~~~~~~~~~~~~~~~~~~~~~~~~~~~~~~~~~~~~~~~~~~~~~~~~~~~~~~~~~~~~~\nonumber
\end{eqnarray}\\
Update $\left\{\tilde{\delta}_{t,\mathbf{f}}^{(s)}\right\}$, $\left\{\tilde{\lambda}_{k,t,\mathbf{f}}^{(s)}\right\}$, and $\tilde{\mu}^{(s)}$:
\begin{eqnarray}
\tilde{\delta}_{t,\mathbf{f}}^{(s)}&=&\left(\delta_{t,\mathbf{f}}+\beta_s \left(\sum_{g=1}^G\mathrm{Tr}\left(\mathbf{V}^{\diamond}_{g,t,\mathbf{f}}
\left(\mathbf{V}^{\diamond}_{g,t,\mathbf{f}}\right)^H\right)-P_{\text{tot}}\right)\right)^{+},~~~~~~~~~
\nonumber\\
\tilde{\lambda}_{k,t,\mathbf{f}}^{(s)}&=&\left(\lambda_{k,t,\mathbf{f}}+\beta_s f_{k,t,\mathbf{f}}^{(i)}\left(C_{k,f_{g_k}}^{\diamond}, \eta_{g_k,t,\mathbf{f}}^{\diamond}, \left\{\mathbf{V}_{g,t,\mathbf{f}}^{\diamond}\right\}_{g=1}^G\right)\right)^{+},~~~~~~~~~~~~
\nonumber\\
\tilde{\mu}^{(s)}&=&\left(\mu+\beta_s \left(\sum_{k=1}^K\sum_{f_{g_k}\in\mathcal{F}_{g_k}}C_{k,f_{g_k}}^{\diamond}-C_{\text{tot}}\right)\right)^{+}.
~~~~~~~~~~~~~~~~~~~~~~~~~~~~~~~~~~~~~~~~~~~~~~~~~~~~~~~~~~~~~~~~~~~~~~~~~~~~~~~~~~~~~~~~~~~~~\nonumber
\end{eqnarray}\\
Update $\left\{\delta_{t,\mathbf{f}}\right\}$, $\left\{\lambda_{k,t,\mathbf{f}}\right\}$, and $\mu$:
\begin{eqnarray}
&&\delta_{t,\mathbf{f}}=\tilde{\delta}_{t,\mathbf{f}}^{(s)}
+\frac{\theta^{(s-1)}-1}{\theta^{(s)}}\left(\tilde{\delta}_{t,\mathbf{f}}^{(s)}
-\tilde{\delta}_{t,\mathbf{f}}^{(s-1)}\right),
\nonumber\\
&&\lambda_{k,t,\mathbf{f}}=\tilde{\lambda}_{k,t,\mathbf{f}}^{(s)}
+\frac{\theta^{(s-1)}-1}{\theta^{(s)}}
\left(\tilde{\lambda}_{k,t,\mathbf{f}}^{(s)}-\tilde{\lambda}_{k,t,\mathbf{f}}^{(s-1)}\right),
~~~~
\nonumber\\
&&\mu=\tilde{\mu}^{(s)}+\frac{\theta^{(s-1)}-1}{\theta^{(s)}}
\left(\tilde{\mu}^{(s)}-\tilde{\mu}^{(s-1)}\right).
~~~~~~~~~~~~~~~~~~~~~~~~~~~~~~~~~~~~~~~~~~\nonumber
\end{eqnarray}\\
$\textbf{until}$ convergence\\
$\left\{\mathbf{V}^{(i+1)}_{g,t,\mathbf{f}}, \eta^{(i+1)}_{g,t,\mathbf{f}}\right\}=\left\{\mathbf{V}^{\diamond}_{g,t,\mathbf{f}}, \eta^{\diamond}_{g,t,\mathbf{f}}\right\}$
and $\left\{C^{(i+1)}_{k,f_{g_k}}\right\}=\left\{C^{\diamond}_{k,f_{g_k}}\right\}$.\\
$\textbf{until}$ convergence
\end{algorithmic}
\end{breakablealgorithm}


\bibliographystyle{IEEEtran}
\bibliography{IEEEabrv,LiYang}

\begin{IEEEbiography}[{\includegraphics[width=1in,height=1.25in,clip,keepaspectratio]{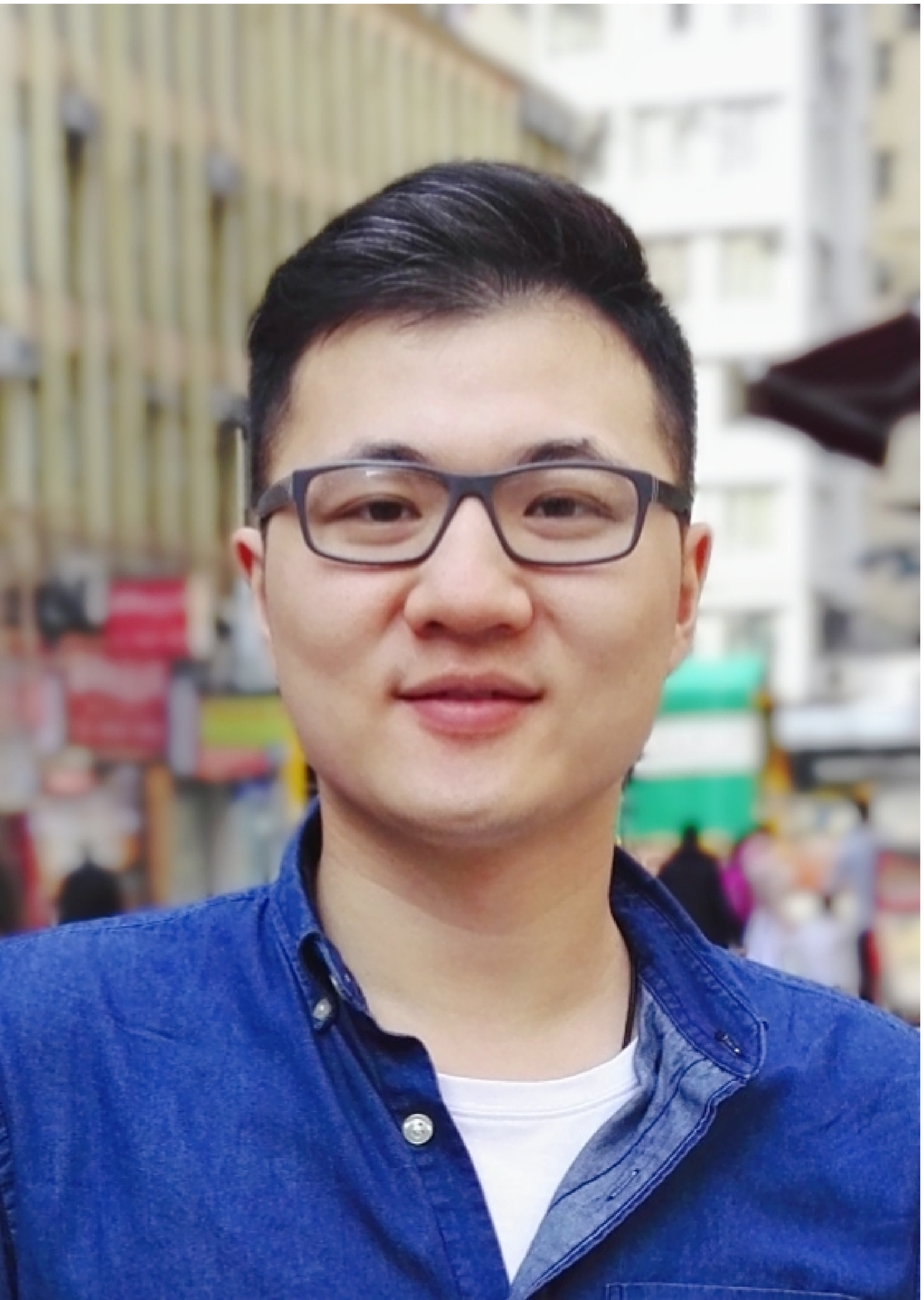}}]{Yang Li} (S'14) received the B.E degree and the M.E degree
in electronics engineering from Beihang University
(BUAA), Beijing, China, in 2012 and 2015, respectively. He received the Ph.D. degree
in electrical and electronic engineering at the University of Hong Kong (HKU),
Hong Kong, China, in 2019.
His research interests are in general areas of wireless communications and signal processing.
\end{IEEEbiography}

\begin{IEEEbiography}[{\includegraphics[width=1in, height=1.25in, clip, keepaspectratio]{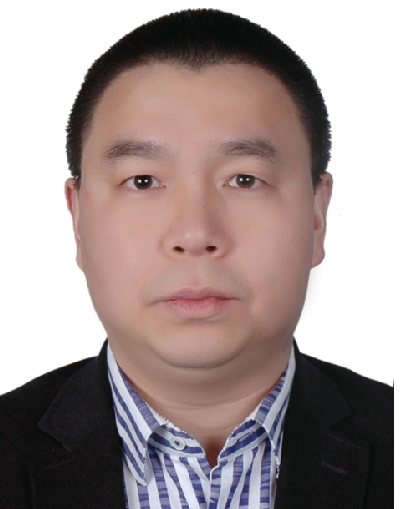}}]{Minghua Xia} (M'12) received his Ph.D. degree in Telecommunications and Information Systems from Sun Yat-sen University, Guangzhou, China, in 2007. Since 2015, he has been a Professor with Sun Yat-sen University.

From 2007 to 2009, he was with the Electronics and Telecommunications Research Institute (ETRI) of South Korea, Beijing R\&D Center, Beijing, China, where he worked as a member and then as a senior member of engineering staff. From 2010 to 2014, he was in sequence with The University of Hong Kong, Hong Kong, China; King Abdullah University of Science and Technology, Jeddah, Saudi Arabia; and the Institut National de la Recherche Scientifique (INRS), University of Quebec, Montreal, Canada, as a Postdoctoral Fellow. His research interests are in the general areas of wireless communications and signal processing.

Dr. Xia received the Professional Award at the IEEE TENCON, held in Macau, in 2015. He was recognized as Exemplary Reviewer by {\scshape IEEE Transactions on Communications} in 2014, {\scshape IEEE Communications Letters} in 2014, and {\scshape IEEE Wireless Communications Letters} in 2014 and 2015. Dr. Xia serverd as TPC Symposium Chair of IEEE ICC'2019 and now serves as Associate Editor for the {\scshape IEEE Transactions on Cognitive Communications and Networking} and the {\scshape IET Smart Cities}.
\end{IEEEbiography}

\begin{IEEEbiography}[{\includegraphics[width=1in,height=1.25in,clip,keepaspectratio]{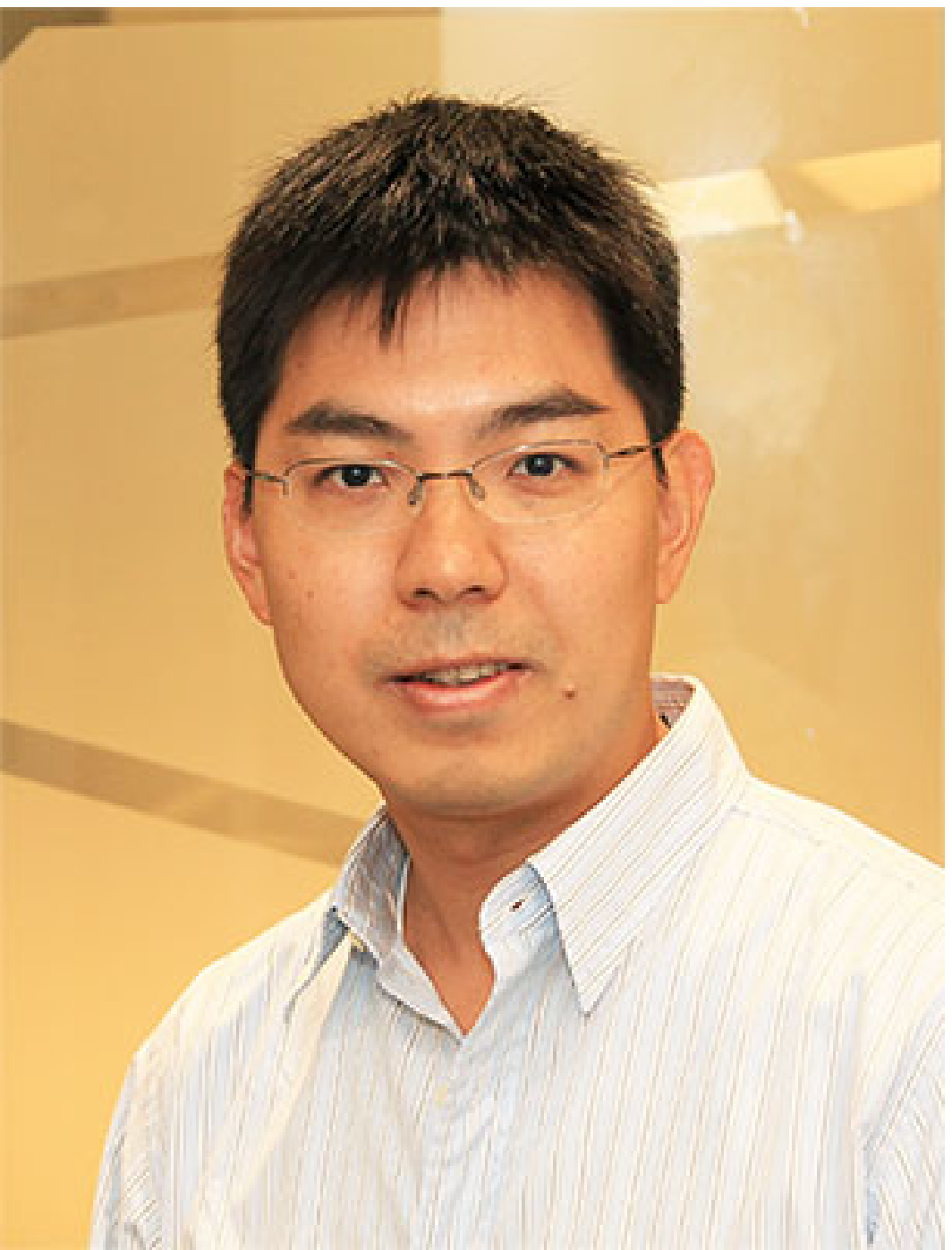}}]{Yik-Chung Wu} received the B.Eng. (EEE) degree in 1998 and the M.Phil. degree in 2001 from the University of Hong Kong (HKU). He received the Croucher Foundation scholarship in 2002 to study Ph.D. degree at Texas A\&M University, College Station, and graduated in 2005. From August 2005 to August 2006, he was with the Thomson Corporate Research, Princeton, NJ, as a Member of Technical Staff.  Since September 2006, he has been with HKU, currently as an Associate Professor. He was a visiting scholar at Princeton University, in summers of 2015 and 2017. His research interests are in general areas of signal processing, machine learning and communication systems. Dr. Wu served as an Editor for {\scshape IEEE Communications Letters} and {\scshape IEEE Transactions on Communications}, and is currently an editor for {\scshape Journal of Communications and Networks}.
\end{IEEEbiography}

\end{document}